\newcommand{\id}{\ensuremath{\mathds{1}}}
\renewcommand{\vec}[1]{\boldsymbol{#1}}
\newcommand{\hil}{\mathcal{H}}
\newcommand{\fii}{\varphi}
\newcommand{\R}{\mathbb{R}}
\newcommand{\N}{\mathbb{N}}
\newcommand{\Z}{\mathbb{Z}}
\newtheoremstyle{mystyle}
  {6pt}
  {6pt}
  {\normalfont}
  {0pt}
  {\bf}
  {.}
  { }
  {}
\theoremstyle{mystyle}
\newtheorem{theorem}{Theorem}
\newtheorem{lemma}[theorem]{Lemma}
\newtheorem{observation}[theorem]{Observation}
\begin{document}

\title{Quantum Marginal Problem and Incompatibility}

\author{Erkka Haapasalo}
\affiliation{Centre for Quantum Technologies, National University of Singapore, Science Drive 2, Block S15-03-18, Singapore 117543}
\affiliation{Department of Physics and Center for Field Theory and Particle Physics, Fudan University, Shanghai 200433, China}

\author{Tristan Kraft}
\affiliation{Naturwissenschaftlich-Technische Fakult\"at, Universit\"at Siegen, Walter-Flex-Str. 3, D-57068 Siegen, Germany}

\author{Nikolai Miklin}
\affiliation{Institute of Theoretical Physics and Astrophysics, National Quantum Information Center, Faculty of Mathematics, Physics and Informatics, University of Gdansk, 80-952 Gda\'{n}sk, Poland}

\author{Roope Uola}
\affiliation{D\'{e}partement de Physique Appliqu\'{e}e, Universit\'{e}  de Gen\`{e}ve, CH-1211 Gen\`{e}ve, Switzerland}

\maketitle

\begin{abstract}
One of the basic distinctions between classical and quantum mechanics is the existence of fundamentally incompatible quantities. Such quantities are present on all levels of quantum objects: states, measurements, quantum channels, and even higher order dynamics. In this manuscript, we show that two seemingly different aspects of quantum incompatibility: the quantum marginal problem of states and the incompatibility on the level of quantum channels are in many-to-one correspondence. Importantly, as incompatibility of measurements is a special case of the latter, it also forms an instance of the quantum marginal problem. The generality of the connection is harnessed by solving the marginal problem for Gaussian and Bell diagonal states, as well as for pure states under depolarizing noise. Furthermore, we derive entropic criteria for channel compatibility, and develop a converging hierarchy of semi-definite programs for quantifying the strength of quantum memories.
\end{abstract}

\section{Introduction}
Incompatibility of measurements is one of the most intriguing features of quantum theory~\cite{heinosaari2016invitation}. It refers to the fact that certain measurements cannot be performed simultaneously. Although being an abstract concept on the mathematical level, measurement incompatibility is important also from an experimental perspective. It gives rise to the non-classical behaviour in various quantum protocols such as quantum steering \cite{Quintino2014,Uola2014,uola2015one,KiBuUoPe2017}, contextuality \cite{Xu2019,Tavakoli2019}, tests of macrorealism \cite{Clemente2015,UoViCo2018}, quantum communication \cite{CaHeTo2019,SkSuCa2019,uola2019a,Michal2019,Leo2019,Claudio2019,Mori2019,uola2019b,Buscemi2019}, and non-locality~\cite{fine1982hidden,Wolf2009}. Crucially, any set of incompatible measurements results in genuinely quantum behavior given that one possesses a properly chosen catalyst state. Conversely, compatible measurements give rise to only classical behavior regardless of how the system is prepared.

Being intimately related to the quantum advantage in various tasks, it becomes desirable to search for the fundamental properties of quantum theory that allow for the existence of incompatible measurements. Motivated by this question, we attack the problem with a general approach towards the fact that not all quantum resources can be broadcasted. More precisely, we show that a recently introduced concept of channel incompatibility, which includes measurement incompatibility and no-broadcasting as special cases, is one-to-many connected to an instance of the so-called quantum marginal problem.

The quantum marginal problem asks whether, for a set of reduced states, there is a global state that has these states as its marginals. In general, the quantum marginal problem, which is also known as the $N$-representability problem~\cite{ruskai1969n}, is considered to be one of the most fundamental unsolved problems in quantum chemistry~\cite{national1995mathematical}. In fact, the general problem is proven to be QMA-complete~\cite{Liu2006}. Similar to measurement incompatibility, the marginal problem has strong implications on practical applications. Namely, it is related to the monogamy of entanglement, which refers to the fact that entanglement cannot be shared arbitrarily~\cite{Coffman2000}.

In this manuscript, we focus on an instance of the quantum marginal problem, that is the most relevant from the viewpoint of quantum correlations. This allows us to demonstrate the importance of our results on a broad range of topics, e.g., quantum memories and various relevant marginal problems. More precisely, we develop an efficient method for deciding the strength of quantum memories, based on semidefinite programming (SDP). Moreover, we derive entropic criteria for channel incompatibility, characterize the antidegradability of qubit channels, and solve the marginal problem for the general Gaussian case, for pairs of Bell-diagonal states, and for pairs of general pure states under depolarizing noise.

\section{Quantum marginal problem} Given a collection of $\abs{k}$ states $\qty{\varrho_{J_k}}_{k=1}^{\abs{k}}$ where $J_k\subset I$ and $I=\qty{1,\ldots,n}$, the question is whether there exists a global state $\varrho_I$ on $\mathcal{H}_I$, such that $\varrho_{J_k} = \tr_{I\setminus J_k}[\varrho_I]$ for all $k$. For example, given two bipartite states $\varrho_{AB_1}$ and $\varrho_{AB_2}$ the problem is to find a tripartite state $\varrho_{AB_1B_2}$ compatible with them.

Partial results on the quantum marginal problem are known~\cite{coleman1963structure,klyachko2004quantum,klyachko2006quantum,schilling2017reconstructing}, however, most of them concern non-overlapping marginals, i.e. the case of disjoint sets $J_k$ and concern only pure global states. We concentrate on an instance of the quantum marginal problem, where all given marginals are bipartite and overlap on a single party, namely $I:=\qty{A,B_1,\ldots,B_n}$ and $J_k=\{A,\,B_k\}$. One immediate necessary condition for the existence of a global state is the common marginal on $A$, i.e. $\varrho_A={\rm tr}_{B_k}[\varrho_{AB_k}]$ is the same for all $k$. In our scenario, the global state does not need to be pure.

As a special case we have {\it symmetric extendibility}: A bipartite state $\varrho_{AB}$ is said to have $n$ symmetric extensions if there exists a state $\varrho_{AB_1\ldots B_n}$ such that $\varrho_{AB}=\tr_{I\setminus AB_k}[\varrho_{AB_1\ldots B_n}]$ for all $k$. The set of states possessing $n$ symmetric extensions for all $n\geq 2$ coincides with the set of separable states \cite{Chiribella2011,BaeAcin2006,Stormer69,RaggioWerner89}.

A natural quantifier regarding the marginal problem is the consistent robustness. In the simplest case, one has a pair of bipartite states $\bm{\varrho}:=(\varrho_{AB_1},\varrho_{AB_2})$ sharing a common first marginal $\varrho_A$. The consistent robustness is defined as
\begin{equation}
\label{Eq:MarginalRobustness}
\mathcal{R}_F^c[\bm{\varrho}] = \min\qty{t\geq 0\bigg\vert \frac{\bm{\varrho} + t\bm{\tau}}{1+t}\in F},
\end{equation}
where the optimization is performed over all non-negative real numbers $t$ and pairs of states $\bm{\tau}:=(\tau_{AB_1},\tau_{AB_2})$ having $\varrho_A$ as the first marginal. The set $F$ denotes those pairs of states for which the marginal problem has a solution. Note that we do not restrict the pairs $\bm{\tau}$ to belong to the set $F$. The consistent robustness can be defined analogously for the problem of symmetric extendibility and for sets of three or more states.

\section{Quantum channels and compatibility} Quantum channels, which are completely positive trace preserving maps, describe changes in quantum systems induced by, e.g., measurements or time evolution. These are linear maps on operators taking states of an input system $\hil_A$ into states of the output system $\hil_{B}$; we denote such channels as $\Phi_{A\to B}$.

To introduce channel compatibility, we need three systems denoted as $\hil_A$, $\hil_{B_1}$, and $\hil_{B_2}$. Channels $\Phi_{A\to B_1}$ and $\Phi_{A\to B_2}$ are called {\it compatible} if there exists a broadcasting channel from which they can be obtained as marginals \cite{HeMi2017}. More precisely, compatibility refers to the existence of a channel $\Phi_{A\to B_1 B_2}$ such that
$\Phi_{A\to B_1}(\varrho)={\rm tr}_{B_2}[\Phi_{A\to B_1 B_2}(\varrho)] \text{ and } \Phi_{A\to B_2}(\varrho)={\rm tr}_{B_1}[\Phi_{A\to B_1 B_2}(\varrho)]$ for all input states $\varrho$. This formulation can be directly generalized to sets of channels: Consider the set $I:=\{A,B_1,\ldots,\,B_n\}$ and its subsets $J_k=\{A,\,B_k\}$, each associated with a channel $\Phi_{J_k}:=\Phi_{A\to B_k}$. These channels are compatible if there is a channel $\Phi_I$ with the input system $A$ and the multipartite output Hilbert space $\hil_{B_1}\otimes\cdots\otimes\hil_{B_n}$ such that $\Phi_{J_k}(\varrho)={\rm tr}_{I\setminus J_k}[\Phi_I(\varrho)]$ for all input states $\varrho$.

We note that channel compatibility is a natural generalization of measurement compatibility, i.e., the property of having a simultaneous readout for many measurements \cite{HeMi2017}. This follows directly from identifying measurements as channels with classical outputs and noting that the broadcast channel corresponds to the simultaneous readout. On top of being more general, channel compatibility differs from measurement compatibility in a crucial manner: Channels can be incompatible with themselves. A channel $\Phi_{A\to B}$ is {\it $n$-extendible}, i.e., it is $n$ times compatible with itself, if it can be broadcasted $n$ times by some channel. These channels are very well known in the literature (see, e.g. Refs.~\cite{Pankowski2013,HeMi2017,Berta2018,Kaur2019} and Refs. therein).
A channel is 2-extendible if and only if it is a post-processing of its conjugate channel~\cite{Pankowski2013,HeMi2017}. Such channels are also called {\it antidegradable}. Channels that are $n$-extendible for any $n\geq2$ are exactly the entanglement breaking channels (or measure-and-prepare channels) \cite{HeMi2017}.

Similarly to the marginal problem, channel incompatibility has a natural quantifier called the generalized robustness. For a channel tuple $\bm{\Lambda}$ one defines the generalized robustness $\mathcal{R}_F^g(\bm{\Lambda})$ with respect to compatible tuples $F$ as
\begin{equation}\label{Eq:ChanRob}
\mathcal{R}_F^g(\bm{\Lambda}) = \min\qty{t\geq 0\bigg\vert \frac{\bm{\Lambda} + t\bm{\Gamma}}{1+t}\in F},
\end{equation}
where the optimization is over channel tuples $\bm{\Gamma}$. For a single channel, one can take the set $F$ to be channels that are $n$-extendible.

\section{The channel-state dualism}\label{sec:ChStDu} In this paper, we mainly work with finite-dimensional Hilbert spaces. We present the methodology for this setting in the main text and discuss the straight-forward generalization to the infinite-dimensional case in Appendix~G. The infinite-dimensional case is only needed for the Gaussian scenario.

Whenever $\varrho_A=\sum_n t_n\dyad{n}$ is a full-rank state on $\hil_A$ with the {\it canonical purification} $\ket{\Omega_A}:=\sum_n\sqrt{t_n}\ket{nn}$, we denote for all channels $\Phi_{A\to B}$ the {\it $\ket{\Omega_A}$-Choi state} as $S_{\ket{\Omega_A}}(\Phi_{A\to B}):=({\rm id}\otimes\Phi_{A\to B})(\dyad{\Omega_A})$. Recall that, traditionally in the finite-dimensional setting, the Choi-state of a channel $\Phi_{A\to B}$ is defined w.r.t.\ the maximally entangled vector $\ket{\Omega_A}=d_A^{-1/2}\sum_{n=0}^{d_A-1}\ket{nn}$. In the infinite-dimensional case however, this vector is not available and we have to content ourselves with the Choi-states defined w.r.t.\ more general maximally entangled vectors. According to Refs.~\cite{KiBuUoPe2017,Haapasalo2019b}, the map $S_{\ket{\Omega_A}}$ is a well defined bijection between the set of channels $\Phi_{A\to B}$ and the set of bipartite states $\varrho_{AB}$ with the fixed first marginal ${\rm tr}_B[\varrho_{AB}]=\varrho_A$.

Conversely, when $\varrho_{AB}$ is a state on $\hil_A\otimes\hil_B$, we can assume that the reduced state $\varrho_A={\rm tr}_B[\varrho_{AB}]$ is of full rank by restricting the dimension of the subsystem $A$. By giving $\varrho_A$ the canonical purification $\ket{\Omega_A}$, we call the unique channel $\Phi_{A\to B}(\varrho):={\rm tr}_A[\varrho_{AB}(\varrho_A^{-1/2}\varrho^{T_A}\varrho_A^{-1/2}\otimes\id_{\hil_B})]$ the {\it $\ket{\Omega_A}$-Choi channel of $\varrho_{AB}$}. Here $\varrho^{T_A}$ is the partial transpose of $\varrho$ on the system $A$.

\section{Main result} Having an equal marginal on Alice's side is a necessary condition for the existence of a solution to the marginal problem. As this amounts to fixing the mapping in the Choi-Jamio\l kowski isomorphism, we are ready to state our main result.

\begin{theorem}\label{theor:central}
Fix the Choi-Jamio\l kowski isomorphism, i.e.,\ the full-rank state on $\hil_A$ and its canonical purification. A set of channels is compatible if and only if the marginal problem involving their Choi states has a solution. Moreover, this connection is quantitative in that the consistent robustness of the marginal problem in Eq.~\eqref{Eq:MarginalRobustness} matches with the incompatibility robustness of the channels in Eq.~\eqref{Eq:ChanRob}.
\end{theorem}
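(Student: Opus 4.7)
The plan is to use the Choi-Jamio\l kowski bijection $S_{\ket{\Omega_A}}$ as a dictionary between channels with input $A$ and bipartite states with first marginal $\varrho_A$. Two properties of this bijection do all the work: (i) $S_{\ket{\Omega_A}}$ is affine, so componentwise convex mixtures of channels correspond to componentwise convex mixtures of their Choi states; and (ii) partial trace over part of the output commutes with the Choi map, i.e., $({\rm id}\otimes{\rm tr}_{B'})(S_{\ket{\Omega_A}}(\Phi))=S_{\ket{\Omega_A}}({\rm tr}_{B'}\circ\Phi)$ for any channel $\Phi$ whose output factors as $B\otimes B'$. Both are immediate from the definition $S_{\ket{\Omega_A}}(\Phi)=({\rm id}\otimes\Phi)(\dyad{\Omega_A})$.

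For the qualitative equivalence, the forward direction is essentially a one-liner: given a broadcasting channel $\Phi_I$ realising compatibility of $\{\Phi_{J_k}\}_k$, property (ii) shows that the global Choi state $S_{\ket{\Omega_A}}(\Phi_I)$ has the individual Choi states $S_{\ket{\Omega_A}}(\Phi_{J_k})$ as its $J_k$-marginals. For the converse, suppose a global state $\varrho_I$ has the $S_{\ket{\Omega_A}}(\Phi_{J_k})$ as its $J_k$-marginals. Since each of these Choi states has first marginal $\varrho_A$, so must $\varrho_I$, and the recovery formula \eqref{eq:CJrecovery} then produces a unique channel $\Phi_I$ with $S_{\ket{\Omega_A}}(\Phi_I)=\varrho_I$. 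Applying (ii) and the bijectivity of $S_{\ket{\Omega_A}}$ at the level of each subsystem $J_k$ forces ${\rm tr}_{I\setminus J_k}\circ\Phi_I=\Phi_{J_k}$, so $\Phi_I$ broadcasts the given channels.

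For the quantitative claim, I will argue that the two optimisations in \eqref{Eq:ChanRob} and \eqref{Eq:MarginalRobustness} are the same problem written in two languages. Property (i) identifies the convex combinations appearing in the two infima componentwise. The decisive point is that the admissible noises coincide under the Choi map: the tuples $\bm{\Gamma}$ in \eqref{Eq:ChanRob} are tuples of channels, and by the characterisation of $S_{\ket{\Omega_A}}$ as a bijection onto bipartite states with first marginal $\varrho_A$, these correspond exactly to the \emph{consistent} noise tuples $\bm{\tau}$ allowed in \eqref{Eq:MarginalRobustness}. Combined with the already-proved qualitative equivalence applied to the mixed tuples, this makes the feasible sets (in $t$) of the two minimisations identical, so their optimal values agree.

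The main obstacle I anticipate is mostly bookkeeping rather than conceptual: one must pin down that the consistency constraint on the noise in \eqref{Eq:MarginalRobustness} — that $\bm{\tau}$ shares the marginal $\varrho_A$ with $\bm{\varrho}$ — is precisely what makes $\bm{\tau}$ correspond to a tuple of channels under the inverse Choi map, neither more nor less. Without that constraint the noise would range over arbitrary bipartite states and would escape the Choi-Jamio\l kowski dictionary; with it, the correspondence between the two robustness problems is tight and the quantitative statement follows directly from the qualitative one.
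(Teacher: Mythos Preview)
Your proposal is correct and follows essentially the same route as the paper's proof in Appendix~\ref{appA}. The paper verifies the marginal-preservation statement by explicit trace-duality computations against test operators $D\otimes E$ (using the identity $\tr[\varrho_{AB}(D\otimes E)]=\tr[\Phi_{A\to B}(\varrho_A^{1/2}D^{T_A}\varrho_A^{1/2})E]$), which is exactly your property~(ii) unpacked; and its treatment of the robustness claim is the same appeal to affinity and bijectivity of $S_{\ket{\Omega_A}}$ onto states with fixed $A$-margin that you give.
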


The proof can be found in Appendix~A. Note that as a special case of the result one gets a connection between symmetric extendibility and self-compatibility of channels.

\section{From states to channels: quantum memories}
A prominent way of quantifying the strength of quantum memories is by asking how close the corresponding quantum channel is to a measure-and-prepare channel. The generalised robustness with respect to measure-and-prepare channels gives a reasonable measure of such distance, as it has the basic properties one expects from a resource quantifier, i.e., faithfulness, monotonicity, convexity, and stability under tensor products~\cite{1907.02521}. The quantifier has also an operational meaning as the amount of advantage a quantum memory can give over memoryless channels in correlation tasks~\cite{1907.02521,uola2019b}. 
The quantifier is simply called the robustness of quantum memories (RoQM). Although having many desired properties, RoQM has one drawback: general methods for its efficient evaluation remain unknown. We note that approximate methods were developed in Ref.~\cite{1907.02521}.

Here, we provide a strategy for the efficient evaluation of the RoQM. Measure-and-prepare channels are closely related to separable states through the Choi-Jamio\l kowski isomorphism. As the set of separable states can be characterised with a converging hierarchy of SDPs, Theorem \ref{theor:central} can be used to develop a hierarchy of SDPs converging to the RoQM. On the round $n$ of the hierarchy, one calculates the robustness with respect to $n$-extendible channels. As $n$-extendible channels form a superset of $n+1$-extendible channels, every round of the hierarchy gives a lower bound on the next one, see Fig.~\ref{fig:depol1}. The hierarchy converges to the RoQM, as infinitely many times extendible channels coincide with measure-and-prepare channels~\cite{HeMi2017}, see Appendix~C for details.

\begin{observation}\label{prop:hierarchy}
The robustness of quantum memories can be evaluated with a converging hierarchy of SDPs.
\end{observation}

\begin{figure}[t!]
    \centering
    \includegraphics[width=.46\textwidth]{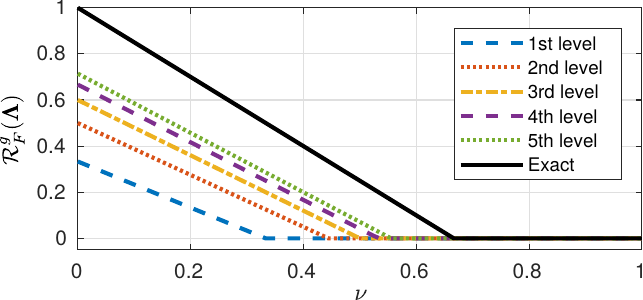}
    \caption{Generalized robustness of depolarizing channel $\mathbf{\Lambda}(\rho) = (1-\nu)\rho+\nu\frac{1}{d}\id_d$ for $d=2$ with respect to the set $F$ of measure-and-prepare channels. Solutions for $1-5$ levels of hierarchy~\cite{Doherty2002,Doherty2004} are compared to the exact solution obtained with PPT condition~\cite{PPT}.The hierarchy of ~\cite{Doherty2002,Doherty2004} becomes computationally demanding for higher levels. In practice, a PPT condition can be added to each level of this hierarchy, which in our case would lead to its convergence at level $1$.}
    \label{fig:depol1}
\end{figure}

We note that every level of the hierarchy can be evaluated from correlations between inputs to the channel and the measurement outcomes on the output~\cite{1907.02521,uola2019b}. These correlations correspond to decoding (output measurement) an encoded message (input state) after it has gone through the channel. For completeness, we illustrate this technique in Appendix~B for both $n$-self-compatibility and symmetric extendibility. This results in an experimentally feasible scenario for investigating the strength of quantum memories. Indeed, similar techniques have been used for experimental evaluation of the robustness of quantum steering and coherence~\cite{Sun_2018,Zheng2018}.

\section{From states to channels: entropic criteria} For states of low dimension the quantum marginal problem, and hence compatibility of channels (see Theorem~\ref{theor:central}), can be tackled with SDPs~\cite{boyd2004convex, Gaertner2012}. However, for higher-dimensional cases this approach becomes computationally demanding. One can nevertheless give some approximate solutions by means of entropic constraints, see, e.g., Ref.~\cite{carlen2013extension}. Now we demonstrate the use of Theorem \ref{theor:central} by translating the basic entropic results into witnesses of channel incompatibility. 

\begin{observation}
Incompatibility of channels can be determined by inequalities satisfied by von Neumann entropy.
\end{observation}

Indeed, let us define the $\ket{\Omega_A}$-entropy of a channel $\Phi_{A\to B}$ as the von Neumann entropy of its $\ket{\Omega_A}$-Choi state
$H_{\ket{\Omega_A}}(\Phi_{A\to B}) = -\tr[S_{\ket{\Omega_A}}(\Phi_{A\to B})\log S_{\ket{\Omega_A}}(\Phi_{A\to B})].$
Entropy of the reduced states of $\ket{\Omega_A}$-Choi state of $\Phi_{A\to B}$ are simply von Neumann entropy of state $\varrho_A$, $H(\varrho_A)$ for which $\ket{\Omega_A}$ is a canonical purifying vector, and von Neumann entropy of $\varrho_B = \Phi_{A\to B}(\varrho^T_A)$, where transpose is taken with respect to the basis in which $\ket{\Omega_A}$-Choi state of $\Phi_{A\to B}$ is defined.

Von Neumann entropy is known to satisfy certain linear inequalities, namely, strong subadditivity~\cite{lieb1973proof} and weak monotonicity (Eq.~(3.2) of Ref.~\cite{Araki1970}). The latter can be directly applied to our problem. It takes the following form for two channels $\Phi_{A\to B_1}$ and $\Phi_{A\to B_2}$
\begin{align}
\label{eq:entr_1}
     H_{\ket{\Omega_A}}(\Phi_{A\to B_1}) + &H_{\ket{\Omega_A}}(\Phi_{A\to B_2}) \notag\\
    & - H(\varrho_{B_1}) - H(\varrho_{B_2}) \geq 0.
\end{align}

\begin{figure}[t!]
    \centering
    \includegraphics[width=.48\textwidth]{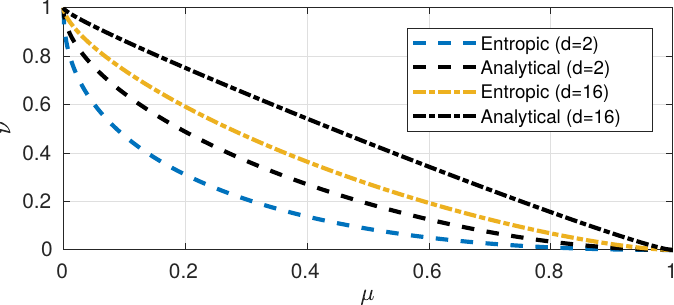}
    \caption{Areas of compatibility of two depolarizing channels with parameters $\mu$ and $\nu$ for two cases: $d=2$ and $d=16$.}
    \label{fig:depol}
\end{figure}

The above criteria are applicable to channels of arbitrary dimension and are experimentally testable e.g. by means of process tomography~\cite{altepeter2003ancilla}. In Fig.~\ref{fig:depol} we show the boundaries of the areas of compatibility of two depolarizing channels $\Phi_{A\to B_1}$ and $\Phi_{A\to B_2}$ defined as
\begin{align}
\Phi_{A\to B_1}(\varrho)&=(1-\mu)W(q,p)\varrho W(q,p)^\dagger+\frac{\mu}{d}\id,\label{eq:depolB1}\\
\Phi_{A\to B_2}(\varrho)&=(1-\nu)W(r,s)\varrho W(r,s)^\dagger+\frac{\nu}{d}\id,\label{eq:depolB2}
\end{align}
with $W(q,p)\ket{j}=e^{i\frac{\pi}{d}(q+2j)p}\ket{j+q}$, for $q,\,p,\,r,\,s\in\Z_d$ for dimensions $d=2$ and $d=16$ and compare those with analytical results of  Ref.~\cite{Haapasalo2019a} (also given in Eq.~\eqref{ineq:munu}). We note that the entropic criterion in Eq.~(\ref{eq:entr_1}) is not the only one that can be translated and that the criteria may depend on the choice of Choi isomorphism (see Appendix~D).

It is clear that the aforementioned entropic constraints can also be applied to the problem of symmetric extendibility (see Appendix~D). Moreover, the symmetric extendibility of a bipartite qubit state has been fully resolved in Ref.~\cite{Chen_et_al_2018} and this result readily characterizes self-compatibility and, hence, the antidegradability of any qubit-to-qubit channel. See the straight-forward translation in Appendix~E. A very similar analysis to characterize the antidegradable qubit-to-qubit channels has been carried out in~\cite{PaChen2017}.

\section{From channels to states: depolarizing noise and the Bell-diagonal marginal problem}
Consider two depolarizing channels $\Phi_{A\to B_1}$ and $\Phi_{A\to B_2}$ as defined in Eqs.~(\ref{eq:depolB1}) and (\ref{eq:depolB2}). The compatibility of these channels was completely characterized in Ref.~\cite{Haapasalo2019a}. Namely for $\mu,\,\nu\in[0,1]$, the channels  $\Phi_{A\to B_1}$ and $\Phi_{A\to B_2}$ are compatible if and only if
\begin{equation}\label{ineq:munu}
\mu+\frac{2}{d}\sqrt{\mu\nu}+\nu\geq1.
\end{equation}
Using Theorem~\ref{theor:central} we get the following result.

\begin{observation}\label{theor:Wstates}
Consider two pure states $\ket{\fii_{AB_1}}$ and $\ket{\fii_{AB_2}}$, such that the common marginal $\varrho_A$ is of full rank. For $\mu,\,\nu\in[0,1]$, there is a tripartite state $\varrho_{AB_1B_2}$ such that $\varrho_{AB_1}=(1-\mu)\dyad{\fii_{AB_1}}+\mu\frac{1}{d}\varrho_A\otimes\id_d$ and $\varrho_{AB_2}=(1-\nu)\dyad{\fii_{AB_2}}+\nu\frac{1}{d}\varrho_A\otimes\id_d$ if and only if inequality \eqref{ineq:munu} holds.
\end{observation}

The proof of the above Observation is presented in Appendix~F.

Another result concerns Pauli channels, which we denote by $\Phi_{\vec{p}}(\varrho) =  p_0\varrho+p_x\sigma_x\varrho\sigma_x^\dag+p_y\sigma_y\varrho\sigma_y^\dag+p_z\sigma_z\varrho\sigma_z^\dag$ for any probability vector $\vec{p} = (p_0,p_x,p_y,p_z)$. According to Ref.~\cite{Haapasalo2019a}, channels $\Phi_{\vec{p}}$ and $\Phi_{\vec{q}}$ are compatible if and only if there are real numbers $\alpha,\,\beta,\,\gamma$ such that ${\bf M}_{\vec{p},\vec{q}}(\alpha,\beta,\gamma)\geq0$, where
\begin{equation}\label{eq:Mpq}
{\bf M}_{\vec{p},\vec{q}}(\alpha,\beta,\gamma)=
\left(\begin{array}{cccc}
p_0&\alpha&\beta&\langle\vec{q}\rangle_1-\gamma\\
\cdot&p_x&\gamma&\langle\vec{q}\rangle_2-\beta\\
\cdot&\cdot&p_y&\langle\vec{q}\rangle_3-\alpha\\
\cdot&\cdot&\cdot&p_z
\end{array}\right),
\end{equation}
and $\langle\vec{q}\rangle_1=\frac{1}{2}(q_0-q_x-q_y+q_z)$, $\langle\vec{q}\rangle_2=\frac{1}{2}(q_0-q_x+q_y-q_z)$, and $\langle\vec{q}\rangle_3=\frac{1}{2}(q_0+q_x-q_y-q_z)$. Note that, since the diagonal entries of the above matrix are within the interval $[0,1]$, we may restrict the range of the parameters $\alpha$, $\beta$, and $\gamma$ within $[-1,1]$.

A two-qubit state $\varrho$ is called {\it Bell-diagonal} if there is a probability vector $\vec{p}$ such that $\varrho=\varrho_{\vec{p}}:=p_0\dyad{\Omega_0}+p_x\dyad{\Omega_x}+p_y\dyad{\Omega_y}+p_z\dyad{\Omega_z}$, where $\ket{\Omega_0}:=\frac{1}{\sqrt{2}}(\ket{00}+\ket{11})$ and $\ket{\Omega_r}:=(\id_2\otimes\sigma_r)\ket{\Omega_0}$ for $r=x,\,y,\,z$. The following Observation follows from Theorem~\ref{theor:central} and the above result on compatibility of Pauli channels and provides the solution to the marginal problem of two Bell-diagonal two-qubit states.

\begin{observation}\label{prop:Belldiag}
For probability vectors $\vec{p}$ and $\vec{q}$, there is a three-qubit state $\varrho_{AB_1B_2}$ such that $\varrho_{AB_1}=\varrho_{\vec{p}}$ and $\varrho_{AB_2}=\varrho_{\vec{q}}$ if and only if there are $\alpha,\,\beta,\,\gamma\in[-1,1]$ such that ${\bf M}_{\vec{p},\vec{q}}(\alpha,\beta,\gamma)\geq0$.
\end{observation}

Together with the symmetric extendibility result of Ref.~\cite{Chen_et_al_2018}, Observation \ref{prop:Belldiag} can be taken as a first step towards characterizing all those pairs of two-qubit states which are margins of a three-qubit state.

\section{Gaussian marginal problems and Gaussian broadcasting}

To extend our technique to continuous variable systems, we derive necessary and sufficient conditions for the solvability of marginal problems involving Gaussian states. Using this and Theorem \ref{theor:central} allows also the characterization of compatibility of Gaussian channels. This generalizes the compatibility results in Ref.~\cite{Lami_et_al}.

Recall that a Gaussian state $\varrho$ on an $N$-mode system is uniquely characterized by the {\it covariance matrix} $\vec{C}_\varrho$ associated with the mean squares of the quadrature operators and the {\it displacement vector} $\vec{d}_\varrho$ associated with the means of the quadratures. We concentrate here on the case of three systems $A$, $B_1$, and $B_2$ with modes $N_A$, $N_{B_1}$, and $N_{B_2}$; the general case is discussed in Appendix~G. We denote the special symplectic matrices of any $C\in\{A,B_1,B_2\}$ by
$$
\vec{S}_C:=\left(\begin{array}{cc}
     0&\id_{N_C} \\
     -\id_{N_C}&0
\end{array}\right).
$$
We are ready to present the solution to the Gaussian marginal problem, the proof of which is found in Appendix~G.

\begin{observation}\label{obs:GaussianMarg}
The marginal problem involving the Gaussian states $\varrho_{AB_1}$ and $\varrho_{AB_2}$ has a solution if and only if there is a Gaussian state $\varrho_{AB_1B_2}$ solving the problem. Moreover, if
$$
\vec{C}_{\varrho_{AB_i}}=\left(\begin{array}{cc}
     \vec{C}_A&\vec{X}_i  \\
     \vec{X}_i^T&\vec{C}_{B_i} 
\end{array}\right),\quad i=1,\,2,
$$
the marginal problem is solvable if and only if there is a real $(2N_{B_1}\times 2N_{B_2})$-matrix $\vec{Y}$ such that
$$
\left(\begin{array}{ccc}
    \vec{C}_A+i\vec{S}_A &\vec{X}_1 &\vec{X}_2  \\
    \vec{X}_1^T &\vec{C}_{B_1}+i\vec{S}_{B_1} &\vec{Y}  \\
    \vec{X}_2^T &\vec{Y}^T &\vec{C}_{B_2}+i\vec{S}_{B_2}
\end{array}\right)\geq0.
$$
\end{observation}

To translate the above result to channels, we recall that a channel $\Phi_{A\to B_i}$, $i=1,\,2$, is Gaussian if it maps Gaussian states into Gaussian states. Such channel corresponds to real matrices $\vec{K}$ $(2N_{B_i}\times 2N_{B_i})$ and $\vec{L}$ $(2N_A\times 2N_{B_i})$ and a vector $\vec{m}\in\mathbb{R}^{2N_{B_i}}$ (satisfying certain conditions stated in Appendix~G) such that $\Phi_{A\to B_i}=\Phi_{\vec{K},\vec{L},\vec{m}}$ maps a Gaussian state $\varrho$ into the Gaussian state $\sigma$ with $\vec{C}_\sigma=\vec{L}^T\vec{C}_\varrho\vec{L}+K$ and $\vec{d}_\sigma=\vec{L}^T\vec{d}_\varrho+\vec{m}$.

Using Theorem~\ref{theor:central} together with Observation~\ref{obs:GaussianMarg} we get the following Observation, the proof of which is in Appendix~G.

\begin{observation}\label{obs:GaussianComp}
Gaussian channels are compatible if and only if they have a Gaussian joint channel. Specifically, the channels $\Phi_{\vec{K}_i,\vec{L}_i,\vec{m}_i}$, $i=1,\,2$, are compatible if and only if there is a real $(2N_{B_1}\times 2N_{B_2})$-matrix $\vec{Z}$ such that
\begin{widetext}
$$
\left(\begin{array}{cc}
    \vec{K}_1+i\vec{S}_{B_1}-i\vec{L}_1^T\vec{S}_A\vec{L}_1 &\vec{Z}-i\vec{L}_1^T\vec{S}_A\vec{L}_2  \\
    \vec{Z}^T+i\vec{L}_2^T\vec{S}_A^T\vec{L}_1 &\vec{K}_2+i\vec{S}_{B_2}-i\vec{L}_2^T\vec{S}_A\vec{L}_2  
\end{array}
\right)\geq0.
$$
\end{widetext}
\end{observation}

\section{Conclusions}
We built a general connection between channel compatibility and marginal problems. This brings together seemingly different modes of incompatibility, i.e. incompatibility of states, measurements and channels. As a special case, this proves a connection between symmetric extendibility of states and self-compatibility of channels.

To demonstrate the usefulness of the connection, we translated various key results between the fields resulting in, e.g. entropic criteria for channel incompatibility, an efficient computational method for evaluating the strength of quantum memories, and solutions to various marginal problems.

The connection between compatibility and marginal problems as presented in Theorem \ref{theor:central} is also valid when the input system of the channels is a separable Hilbert space but not necessarily finite dimensional. The other Hilbert spaces may even be non-separable. Marginal problems are mainly studied in the finite-dimensional settings, but this connection allows one to approach infinite-dimensional marginal problems using the well-established methods of incompatibility in infinite-dimensional systems.

\section*{Acknowledgments}
RU is thankful for the financial support from the Finnish Cultural Foundation. NM acknowledges the financial support by First TEAM Grant No. 2016-1/5. TK acknowledges support by the DFG and the ERC (Consolidator Grant 683107/TempoQ). EH acknowledges support from the National Natural Science Foundation of China (Grant No. 11875110).

\section*{Note added}
While finishing this work, we became aware of a related approach to the channel marginal problem. Namely, the authors of Ref.~\cite{Hsieh2021} define the quantum channel marginal problem as the question whether various local dynamics can be seen as instances of the same global dynamics. The difference to our approach is that whereas we concentrate on channels having the same input system, Ref.~\cite{Hsieh2021} allows for various different inputs. As an example, in the approach of Ref.~\cite{Hsieh2021} two identity channels are compatible, when they don't have an overlapping input system, whereas identity channels are by default incompatible in our approach due to the impossibility of perfect broadcasting. To conclude, one could say that our approach is more directly motivated from the perspective of generalising measurement incompatibility, whereas the approach of Ref.~\cite{Hsieh2021} is more oriented towards generalizing the quantum marginal problem.

\bibliographystyle{unsrt}

\onecolumn\newpage
\appendix

\section{Formal statement and proof of Theorem~\ref{theor:central}}
\label{appA}
In this appendix we give a more formal statement of our central result, Theorem \ref{theor:central}, with a proof. We also give some small additional results alluded to in this paper this far.

Let us first derive the form of the Choi channel for a given bipartite state given in the end of Section \ref{sec:ChStDu}. Let us assume that $\Phi_{A\to B}$ is a channel and $\varrho_{AB}$ is a state on $\hil_{AB}$ such that this state and channel are connected with the Choi-Jamio\l kowski isomorphism associated with a canonical purification $\ket{\Omega_A}$ of the first margin $\varrho_A$ of $\varrho_{AB}$. We may assume that there are $t_n>0$ such that $\sum_n t_n=1$ and an orthonormal basis $\{\ket{n}\}_n$ of $\hil_A$ such that $\ket{\Omega_A}=\sum_n\sqrt{t_n}\ket{nn}$. Denote the transpose defined w.r.t. the basis $\{\ket{n}\}_n$ by $D\mapsto D^{T_A}$. Recall that we may define the Heisenberg dual $\Phi_{A\to B}^*$ (a completely positive (normal) linear map taking bounded operators of the output $\hil_B$ to those of the input $\hil_A$) of $\Phi_{A\to B}$ through $\tr[\varrho\Phi_{A\to B}^*(D)]=\tr[\Phi_{A\to B}(\varrho)D]$. We have, for any bounded operator $D$ on $\hil_A$ and $E$ on $\hil_B$,
\begin{align*}
\tr[\varrho_{AB}(D\otimes E)]&=\tr[({\rm id}\otimes\Phi_{A\to B})(\dyad{\Omega_A})(D\otimes E)]=\bra{\Omega_A}D\otimes\Phi_{A\to B}^*(E)\ket{\Omega_A}\\
&=\sum_{m,n}\sqrt{t_m t_n}\bra{m}D\ket{n}\bra{m}\Phi_{A\to B}^*(E)\ket{n}=\sum_{m,n}\sqrt{t_m t_n}\bra{n}D^{T_A}\ket{m}\bra{m}\Phi_{A\to B}^*(E)\ket{n}\\
&=\tr[\varrho_A^{1/2}D^{T_A}\varrho_A^{1/2}\Phi_{A\to B}^*(E)]=\tr[\Phi_{A\to B}(\varrho_A^{1/2}D^{T_A}\varrho_A^{1/2})E].
\end{align*}
Note that this calculation holds even in the case where $\hil_A$ (and $\hil_B$) is infinite dimensional and separable. If ${\rm dim}\,\hil_A<\infty$, we may substitute $D=\varrho_A^{-1/2}\varrho^{T_A}\varrho_A^{-1/2}$, and we obtain the form of the Choi channel as described in Section \ref{sec:ChStDu}. Note also that, even when $\hil_A$ is separable but not finite dimensional, $\Phi_{A\to B}$ is fully determined by evaluating it on inputs $\varrho_A^{1/2}D\varrho_A^{1/2}$ with bounded operators $D$ on $\hil_A$ whenever $\varrho_A$ is faithful (which is the proper counterpart of being of full rank in infinite dimensions) since this set is dense in the trace class of $\hil_A$ w.r.t. the trace norm. We may now formalize Theorem~\ref{theor:central}.\\

\noindent {\bf Theorem 1. (formal)} Denote $I:=\{A,B_1,\ldots,B_n\}$ for $n\in\N$ and $J_j:=\{A,B_j\}$, $j=1,\ldots,\,n$, where $A$ is associated with the Hilbert space $\hil_A$ and $B_j$ with $\hil_{B_j}$, $j=1,\ldots,\,n$. Moreover, we define $\hil_j:=\hil_A\otimes\hil_{B_j}$ and $\hil:=\hil_A\otimes\hil_{B_1}\otimes\cdots\otimes\hil_{B_n}$.
\begin{itemize}
    \item[(i)] Let $\varrho_A$ be a full-rank (or, in the infinite-dimensional case, faithful) state on $\hil_A$ and $\ket{\Omega_A}$ be a canonical purification for $\varrho_A$. Channels $\Phi_{A\to B_j}$ from $\hil_A$ to $\hil_{B_j}$, $j=1,\ldots,\,n$, are compatible if and only if there is a state $\varrho$ on $\hil$ such that ${\rm tr}_{I\setminus J_j}[\varrho]=S_{\ket{\Omega_A}}(\Phi_{A\to B_j})$ for $j=1,\ldots,\,n$.
    \item[(ii)] Let $\varrho_j$ be a state over $\hil_j$ for all $j=1,\ldots,\,n$. There is a state $\varrho$ on $\hil$ such that ${\rm tr}_{I\setminus J_j}[\varrho]=\varrho_j$ if and only if ${\rm tr}_{B_j}[\varrho_j]=\varrho_A$ is fixed for all $j=1,\ldots,\,n$ and, upon assuming that $\varrho_A$ is of full rank (or faithful) and picking a canonical purification $\ket{\Omega_A}$ for $\varrho_A$, the $\ket{\Omega_A}$-Choi channels of $\varrho_j$, $j=1,\ldots,\,n$, are compatible.
    \item[(iii)] Let $\Phi_{A\to B_j}$ be channels from $\hil_A$ to $\hil_{B_j}$ and $\varrho_j$ be states over $\hil_j$ sharing the common (full-rank or faithful) $A$-margin $\varrho_A$ for $j=1,\ldots,\,n$, and pick a canonical purification $\ket{\Omega_A}$ for $\varrho_A$. Whenever $\rho_j=S_{\ket{\Omega_A}}(\Phi_{A\to B_j})$ for $j=1,\ldots,\,n$, the incompatibility robustness of $(\Phi_{A\to B_1},\ldots,\Phi_{A\to B_n})$ coincides with the consistent marginal robustness of $(\varrho_1,\ldots,\varrho_n)$.
\end{itemize}

\begin{proof}
Let us prove item (i): Let us first assume that $\Phi_{A\to B_j}$ are compatible and fix a joint channel $\Phi$ for them; recall that the input space of $\Phi$ is $\hil_A$ and the output space is $\hil_{B_1}\otimes\cdots\otimes\hil_{B_n}$. Denote $\varrho_j:=S_{\ket{\Omega_A}}(\Phi_{A\to B_j})$, $j=1,\ldots,\,n$, and $\varrho:=S_{\ket{\Omega_A}}(\Phi)$. Denote the identity operator on $\hil_{B_j}$ by $\id_j$. Using the above derivation for the Choi channel, we have, for all $j=1,\ldots,\,n$,
\begin{eqnarray*}
\tr[\varrho_j(D\otimes E)]&=&\tr[\Phi_{A\to B_j}(\varrho_A^{1/2}D^{T_A}\varrho_A^{1/2})E]={\rm tr}\big[\Phi(\varrho_A^{1/2}D^{T_A}\varrho_A^{1/2})(\id_1\otimes\cdots\otimes\id_{j-1}\otimes E\otimes\id_{j+1}\otimes\cdots\otimes\id_n)\big]\\
&=&\tr[\varrho(D\otimes\id_1\otimes\cdots\otimes\id_{j-1}\otimes E\otimes\id_{j+1}\otimes\cdots\otimes\id_n)]=\tr[{\rm tr}_{I\setminus J_j}[\varrho](D\otimes E)].
\end{eqnarray*}
for all bounded operators $D$ on $\hil_A$ and $E$ on $\hil_{B_j}$. Thus, $\varrho_j={\rm tr}_{I\setminus J_j}[\varrho]$, $j=1,\ldots,\,n$. The proof of the converse statement is contained in the proof of item (ii).

Let us go on to proving item (ii): Note that, for the existence of a joint state $\varrho$ of the claim, it is necessary that the $A$-margins of the states $\varrho_j$, $j=1,\ldots,\,n$, coincide. According to our earlier observation, we may freely assume that this shared margin $\varrho_A$ is of full rank (or faithful) and we may fix a canonical purification $\ket{\Omega_A}$ for it. Assume first that there is $\varrho$ such that ${\rm tr}_{I\setminus J_j}[\varrho]=\varrho_j$. Denote, for each $j=1,\ldots,\,n$, by $\Phi_{A\to B_j}$ the channel such that $S_{\ket{\Omega_A}}(\Phi_{A\to B_j})=\varrho_j$ and by $\Phi$ the channel (with the input $\hil_A$ and output $\hil_{B_1}\otimes\cdots\otimes\hil_{B_n}$) such that $S_{\ket{\Omega_A}}(\Phi)=\varrho$. Denote, again, the identity operator on $\hil_{B_i}$ by $\id_i$ and pick $j\in\{1,\ldots,\,n\}$. For any bounded $D$ on $\hil_A$ and $E$ on $\hil_{B_j}$,
\begin{eqnarray*}
&&\tr[{\rm tr}_{I\setminus J_j}[\Phi(\varrho_A^{1/2}D\varrho_A^{1/2})]E]={\rm tr}\big[\Phi(\varrho_A^{1/2}D\varrho_A^{1/2})(\id_1\otimes\cdots\otimes\id_{j-1}\otimes E\otimes\id_{j+1}\otimes\cdots\otimes\id_n)\big]\\
&&=\tr[\varrho(D^{T_A}\otimes\id_1\otimes\cdots\otimes\id_{j-1}\otimes E\otimes\id_{j+1}\otimes\cdots\otimes\id_n)]=\tr[\varrho_j(D^{T_A}\otimes E)]=\tr[\Phi_{A\to B_j}(\varrho_A^{1/2}D\varrho_A^{1/2})E].
\end{eqnarray*}
According to our discussion after the derivation of the Choi channel in the beginning of this appendix, this implies $\Phi_{A\to B_j}(\varrho)={\rm tr}_{I\setminus J_j}[\Phi(\varrho)]$ for all states $\varrho$ on $\hil_A$ and $j=1,\ldots,\,n$. The proof of the converse statement follows from the proof of item (i).

The item (iii) follows from the observation that the Choi-Jamio\l kowski isomorphism $S_{\ket{\Omega_A}}$ is an affine bijection between the set of channels $\Phi_{A\to B}$ and the set of states $\varrho_{AB}$ such that ${\rm tr}_B[\varrho_{AB}]=\varrho_A$ is fixed (and of full rank or faithful). Thus, all the convex structures of these sets are mapped in a one-to-one fashion and, particularly, the two robustness measures coincide.
\end{proof}

\section{Operational interpretation}\label{AppD}

The generalised robustness corresponds to a convex distance from a free or resourceless set $F$. Importantly, it also quantifies the outperformance a resource provides over the free set in correlation based games \cite{Ryuji2019,TaRe2019,uola2019a,uola2019b}. Note that in what follows in this section, we concentrate on self-compatibility of channels and symmetric extendibility of states. This is simply because the techniques are more compactly presentable in this scenario as opposed to channel compatibility and the marginal problem. We note that the techniques do generalise to these cases by following, e.g., \cite{uola2019b}.

An input-output game $G$ is a tuple consisting of an input ensemble $\{\sigma_a\}$, a POVM $\{N_b\}$ on the output, and a real-valued reward function $\{\omega_{ab}\}$ assigning a score to each input-output pair $(a,b)$. The players aim at communicating the input states using a quantum channel $\Lambda$ that maximizes the payoff
\begin{align*}
    P(\Lambda,G):=\sum_{a,b}\omega_{ab}\text{tr}[\Lambda(\sigma_a)N_b].
\end{align*}
We note that the payoff is covariant under scaling and shifting of the reward function $\{\omega_{ab}\}$. To remove this covariance (i.e. to treat all games on an equal footing), we define a canonical form of a game as one having the minimum payoff equal to zero and the maximal payoff equal to one, when optimized over quantum channels. The reason for using canonical games is that they provide a tight connection between optimal payoff and robustness measure, as is shown below.

Using an optimal solution of the robustness optimization problem in Eq.~(\ref{Eq:ChanRob}) for single channels (i.e. the free set is the set of $n$-extendible channels) one can write $\Lambda=(1+\mathcal{R}_F^g[\Lambda])\Lambda_F-\mathcal{R}_F^g[\Lambda]\Gamma$, where $\Lambda_F\in F$ and $\Gamma$ is some channel. Noting that in the canonical form the input-output games have non-negative payoffs and that the payoff is linear in the first argument, we get $P(\Lambda,G)\leq (1+\mathcal{R}_F^g[\Lambda])\max_{\Xi\in F} P(\Xi,G)$. Crucially, this upper bound can be reached with an appropriate choice of the game. To see this, we first note that the robustness can be written in the Choi picture as
\begin{equation}\label{Eq:ChoiChanRob}
\mathcal{R}_F^g(\Lambda) = \min\qty{t\geq 0\bigg\vert \frac{S_\Lambda + tS_\Gamma}{1+t}=\Lambda_F\in S_F},
\end{equation}
where $S$ refers to the canonical Choi map and the set $S_F$ is the image of the free set under this map. By solving $S_\Gamma=\frac{1}{t}(S_{\tilde\Lambda}-S_\Lambda)$, where $\tilde\Lambda=(1+t)\Lambda_F$, one gets
\begin{align*}
    1+\mathcal{R}_F^g[\Lambda] = \min_{S_{\tilde\Lambda}\in C_{S_F}}\, & \tr[S_{\tilde\Lambda}] \\
\text{s.t.: }& S_{\tilde\Lambda}\geq S_\Lambda,\notag
\end{align*}
where $S_\Phi$ is the canonical Choi picture, i.e. the one using the isotropic state, of a channel $\Phi$, and $C_K$ is the conic hull of a set $K$, that is the set one gets by multiplying the set $K$ with the non-negative part of the real line. This is an instance of a cone program and as such it has the dual form~\cite{uola2019a, Gaertner2012}
\begin{align}\label{ChanRobDuality}
    1+\mathcal{R}_F^g[\Lambda] = \max_{W\geq0}\, & \tr[S_\Lambda W] \\
\text{s.t.: }& \tr[S_\Psi W]\leq1\ \forall \ S_\Psi\in S_F.\notag
\end{align}
For scenarios with the Slater condition being satisfied the optimal value of the primal and the dual problem coincide. The validity of the Slater condition corresponds to finding a $S_\Gamma\in C_{S_F}$ such that $S_\Gamma> S_\Lambda$. This is clearly true whenever there is a full-rank point in the cone $C_{S_F}$. As the completely mixed state is in the cone we consider, we conclude that the optimal value of the primal and the dual problem coincide.

Crucially, any positive operator $W$ on the shared system can be written as $W=\sum_{a,b}\omega_{ab}\sigma_a^T\otimes N_b$ with $\omega_{ab}$ being real numbers, $(\cdot)^T$ denoting the transposition in the computational basis, and the sets $\{\sigma_a\}$ and $\{N_b\}$ forming an ensemble and a POVM respectively. Using the inverse of the Choi-Jamio\l kowski map, this shows that the objective function of Eq.~(\ref{ChanRobDuality}) is an instance of an input-output game for each choice of $W$. As from Eq.~(\ref{Eq:ChanRob}) it is clear that for an optimal witness $W$ one has $\tr[W S_\Gamma]=0$, an optimal solution of Eq.~(\ref{ChanRobDuality}) forms an instance of a canonical correlation game up to scaling. We arrive at the following formula noting that the left-hand-side is independent of scaling
\begin{align*}
    \sup_{G}\frac{P(\Lambda,G)}{\max_{\Xi\in F}P(\Xi,G)}=1+\mathcal{R}_F^g[\Lambda].
\end{align*}

It should be mentioned that on each level $n$ of self-compatibility, the conic program above reduces to a semi-definite program. These programs are efficiently solvable and they result in an optimal witness $W$. Using the above calculation, one can easily find an implementation for these optimal witnesses as an input-output game. In this way, although evaluating the robustness of quantum memories remains a hard task, on each level of our converging hierarchy, one gets better lower bounds on the robustness that have a simple implementation as an input-output game.

To see how the above can be applied to symmetric extendibility, we call correlation game $G_{cor}$ a tuple consisting of a POVM $\{M_a\}$ on Alice, a POVM $\{N_b\}$ on Bob, and a real-valued reward function $\omega_{ab}$ assigning a score to each pair of outcomes $(a,b)$. The task of the players is to prepare a shared state $\varrho_{AB}$ that maximizes the following payoff
\begin{align*}
    P(\varrho_{AB},G_{cor}):=\sum_{a,b}\omega_{ab}\text{tr}[(M_a\otimes N_b)\varrho_{AB}].
\end{align*}
As in the case of input-output games, the payoff is covariant under scaling and shifting of the reward function. This motivates the definition of a canonical correlation game as one that has minimum payoff equal to zero and the maximal payoff equal to one when optimized over shared states.

As symmetric extendibility can be seen as a property of a single state, we shorten the notation by considering Eq.~(\ref{Eq:MarginalRobustness}) for individual shared states instead of sets of states. To simplify the discussion further, we take a relaxed version of the consistent robustness, i.e. the generalized robustness $\mathcal{R}_F^g$ defined analogously to the case of channels in Eq.~(\ref{Eq:ChanRob}). In case of states this becomes Eq.~(\ref{Eq:MarginalRobustness}) with the modification that the states $\bm{\tau}$ do not have to share a common marginal with $\varrho_{AB}$.

As for input-output games, that the payoff is linear in the first argument and one has $P(\varrho_{AB},G_{cor})\leq (1+\mathcal{R}_F^g[\varrho_{AB}])\max_{\sigma_{AB}\in F} P(\sigma_{AB},G_{cor})$. 
The upper bound can again be reached with a proper choice of the game. To see this, we write the robustness as
\begin{align*}
    1+\mathcal{R}_F^g[\varrho_{AB}] = \min_{\tilde\sigma_{AB}\in C_F}\, & \tr[\tilde\sigma_{AB}] \\
\text{s.t.: }& \tilde\sigma_{AB}\geq\varrho_{AB},\notag
\end{align*}
where $C_F$ is the conic hull of $F$. This is again an instance of a cone program and its dual reads
\begin{align*}
    1+\mathcal{R}_F^g[\varrho_{AB}] = \max_{W\geq0}\, & \tr[\varrho_{AB}W] \\
\text{s.t.: }& \tr[\sigma_{AB}W]\leq1\ \forall \ \sigma_{AB}\in F.\notag
\end{align*}
The Slater condition is valid as the completely mixed state is a free state which can be scaled to be strictly larger than $\varrho_{AB}$.

In the case of states, we write the positive operator $W$ on the shared system as $W=\sum_{a,b}\omega_{ab}M_a\otimes N_b$ with $\omega_{ab}$ being real numbers and the sets $\{M_a\}$ and $\{N_b\}$ forming POVMs. As from Eq.~(\ref{Eq:MarginalRobustness}) it is clear that for an optimal witness $W$ one has $\tr[W\tau_{AB}]=0$, the optimal solution forms an instance of a canonical correlation game up to scaling. We arrive at the following formula noting that the left-hand-side is independent of scaling
\begin{align*}
    \sup_{G_{cor}}\frac{P(\varrho_{AB},G_{cor})}{\max_{\sigma_{AB}\in F}P(\sigma_{AB},G_{cor})}=1+\mathcal{R}_F^g[\varrho_{AB}]
\end{align*}

Symmetric extendibility can be characterised through hierarchy of semi-definite programs in which the level $n$ of the hierarchy checks for the existence of $n$ symmetric extensions~\cite{Doherty2002, Doherty2004}. Choosing the set $F$ to be those shared states that have $n$ or less symmetric extensions gives a task-oriented characterization for the level $n$ of the hierarchy. The statement holds true also for the limit of infinite extensions, i.e. for separability. Hence, whereas it is not clear whether states with $n$ symmetric extensions can outperform states with $n+1$ extensions in correlation experiments based on quantum steering and non-locality, we have derived a task-oriented characterization for all levels of the hierarchy in terms of correlation games.

As mentioned above, the correlation game characterization can be given for the marginal problem as well. To see this, one can perform all the above calculations by treating pairs (or sets) of shared states and payoff functions as direct sums, cf. \cite{uola2019b}.

\section{Proof of Observation \ref{prop:hierarchy}}

According to our main result, Theorem \ref{theor:central}, the set ${\bf EB}$ of entanglement-breaking channels and the set of separable states $\mathcal{S}_{\rm sep}$ are in one-to-one correspondence set up by the channel-state duality (with a fixed generating full-rank state). Similarly, for each $n=1,\,2,\ldots$, the set ${\bf Ch}_n$ of channels $n$ times compatible with themselves and the set $\mathcal{S}_n$ of bipartite states with $n$ symmetric extensions are in one-to-one correspondence. We concentrate on the finite-dimensional case where all these sets are compact (with respect to the natural topologies; the trace norm topology and the CB-norm topology), $\mathcal{S}_{n}\supset\mathcal{S}_{n+1}$ and ${\bf Ch}_n\supset{\bf Ch}_{n+1}$ for all $n=1,\,2,\ldots$, and
$$
\bigcap_{n=1}^\infty \mathcal{S}_n=\mathcal{S}_{\rm sep},\quad\bigcap_{n=1}^\infty {\bf Ch}_n={\bf EB}.
$$

Let $\Phi$ be a quantum memory (a channel) and denote its general noise robustness w.r.t.\ ${\bf Ch}_n$ by $R_n$. Recall that this means that $R_n$ is the minimum of numbers $t\geq0$ such that there is a channel $\Psi$ such that $(1+t)^{-1}(\Phi+t\Psi)\in{\bf Ch}_n$. Since ${\bf EB}\subset{\bf Ch}_n$ for all $n$, we have $R_n\leq R$ where $R$ is the generalized robustness w.r.t.\ ${\bf EB}$, i.e.,\ the generalized robustness of the quantum memory resource of $\Phi$. This robustness is the minimum of numbers $t\geq0$ such that there is a channel $\Psi$ such that $(1+t)^{-1}(\Phi+t\Psi)\in{\bf EB}$. Moreover, since ${\bf Ch}_{n+1}\subset{\bf Ch}_n$, $R_n\leq R_{n+1}$ for all $n$. Denote by $R_0$ the supremum of $\{R_n\}_{n=1}^\infty$. Naturally, $R_0\leq R$. Pick, for each $n=1,\,2,\ldots$, a channel $\Psi_n$ such that
$$
\Phi_n:=\frac{1}{1+R_n}(\Phi+R_n\Psi_n)\in{\bf Ch}_n.
$$
Since the set of all channels (with fixed input and output systems) is compact in this finite-dimensional setting, by passing onto a subsequence, if necessary, we may assume that $\Psi_n\overset{n\to\infty}{\to}\Psi$ for some channel $\Psi$. As $R_n\overset{n\to\infty}{\to}R_0$, it easily follows that
$$
\Phi_n\overset{n\to\infty}{\to}\Phi_0:=\frac{1}{R_0+1}(\Phi+R_0\Psi)
$$
and, since $\Phi_n\in{\bf Ch}_n$ for all $n=1,\,2,\ldots$ and ${\bf EB}$ is closed, we have $\Phi_0\in{\bf EB}$. By the definition of the general noise robustness, we now have $R\leq R_0$. Thus, $R=R_0$, implying that $(R_n)_{n=1}^\infty$ converges to the general robustness of the quantum memory resource of $\Phi$.

\section{More entropic constraints}
\label{appC}
Further constraints can be derived by combining SSA and WM constraints in a way that the resulting inequalities do not contain the unknown parameters like $\ket{\Omega_A}$-entropy of global channels. This can be done with the techniques of linear programming (see e.g. Ref.~\cite{fritz2013entropic}).  In particular, the so-called Fourier-Motzkin algorithm can be used to eliminate variables from the given system of linear inequalities. Geometrically, this elimination corresponds to a projection in the space of the vector of variables. In our case, since e.g., the $\ket{\Omega_A}$-entropy of the global channel $\Phi_{A\to B_1,B_2,B_3}$ cannot be accessed, the corresponding variable needs to be eliminated from the system of inequalities given by SSA and WM constraints. Here we introduce a new constraint for compatibility of three channels $\Phi_{A\to B_1}$, $\Phi_{A\to B_2}$, and $\Phi_{A\to B_3}$  obtained from Fourier-Motzkin elimination:
\begin{eqnarray}
\label{eq:entr_2}
H_{\ket{\Omega_A}}(\Phi_{A\to B_1}) + H_{\ket{\Omega_A}}(\Phi_{A\to B_2}) + 2H_{\ket{\Omega_A}}(\Phi_{A\to B_3})-H(\varrho_{B_1})-H(\varrho_{B_2})-2H(\varrho_A)\geq 0.
\end{eqnarray}

When one tries to resolve the compatibility of channels via explicit methods of marginal problem, e.g. SDP, the choice of $\ket{\Omega_A}$ in Choi isomorphism does not play any role. This is, however, not the case when one applies the entropic constraints. For the case of two depolarizing channels the optimal Choi isomorphism corresponds to the maximally entangled state. For channels with other types of noise the optimal choice of $\ket{\Omega_A}$ can be different. For example, for the noise $\frac{1}{d-1}\id_{d-1} = \frac{1}{d-1}\sum_{i=1}^{d-1}\dyad{i}$ the optimal $\ket{\Omega_A} = \sum_{i=1}^{d}\sqrt{t_i}\ket{i}\otimes\ket{i}$ is the one with $t_d$ epsilon-small and the rest of $t_i$ close to $\frac{1}{d-1}$.

Naturally, the entropic constraints can be applied to the problem of self-compatibility of quantum channels (as a translation of the results on symmetric extendibility of states). Taking both marginal entropies and $\ket{\Omega_A}$-entropies of channels equal, we obtain conditions for self-compatibility of channel $\Phi_{A\to B}$: $H_{\ket{\Omega_A}}(\Phi_{A\to B})- H(\varrho_B)\geq 0$ and $2H_{\ket{\Omega_A}}(\Phi_{A\to B}) -H(\varrho_B)-H(\varrho_A)\geq 0$ from Eq.~\eqref{eq:entr_1} and Eq.~\eqref{eq:entr_2} respectively. Whenever $H(\varrho_A)\geq H(\varrho_B)$, the second constraint is tighter.

\section{Self-compatibility of channels}
\label{appB}
We now go on to study self-compatibility of channels. In the discussion earlier on the entropic bounds for the compatibility of channels, we have defined the $\ket{\Omega_A}$-entropy of a channel as von Neumann entropy of its $\ket{\Omega_A}$-Choi state. However, one can equivalently define it as the Shannon entropy of the $\ket{\Omega_A}$-Choi state spectrum. In that sense, the results on the entropic bounds provide nonlinear inequalities for spectra of 2-extendible channels. One would expect, however, that the explicit forms of these spectra can be very cumbersome to write. On the other hand, some spectral constrains for symmetric extendibility of two-qubit states are known~\cite{Chen_et_al_2018}, which we translate below to the problem of self-compatibility of channels.

In order to identify the spectrum of the Choi-state of a channel, recall that operators $K_i:\hil_A\to\hil_B$, $i=1,\,2,\ldots$, contribute a {\it Kraus decomposition} for an $A\to B$-channel $\Phi$ if $\Phi(\varrho)=\sum_iK_i\varrho K_i^\dag$ (where the series converges w.r.t.\ the trace norm). For any state $\varrho_A$ on $\hil_A$, $\Phi$ has a Kraus operators $K_i$ such that $\tr[\varrho_A K_i^\dag K_j]=0$ whenever $i\neq j$; see Section 3.1 of \cite{Haapasalo2019b}. In this case, we say that $K_i$ are $\varrho_A$-orthogonal. Whenever $\varrho_A$ is of full rank and $\ket{\Omega_A}$ is a canonical purification for $\varrho_A$, we have the spectral decomposition $S_{\ket{\Omega_A}}(\Phi)=\sum_i\dyad{w_i}$ where $\ket{w_i}=(\id_\hil\otimes K_i)\ket{\Omega_A}$ for any $\varrho_A$-orthogonal set $\{K_i\}_i$ of Kraus operators for $\Phi$ \cite[Proposition 1]{Haapasalo2019b}. Thus, the spectrum of $S_{\ket{\Omega_A}}(\Phi)$ consists of the numbers 
$$
\lambda_{\varrho_A}^\Phi(i):=\tr[K_i\varrho_A K_i^\dag]
$$
and the vector $\vec{\lambda}_{\varrho_A}^\Phi:=\big(\lambda_{\varrho_A}^\Phi(i)\big)_i$ is essentially independent of the particular $\varrho_A$-orthogonal set of Kraus operators for $\Phi$.

Let $A$, $B$, and $C$ be now qubit systems. According to Ref. \cite{Chen_et_al_2018}, a state $\varrho$ on $\hil_A\otimes\hil_B$ is symmetrically extendible, i.e.,\ there is a three-qubit state $\hil_{ABC}$ such that $\varrho_{AB}=\varrho=\varrho_{AC}$ if and only if $\tr[{\rm tr}_A[\varrho]^2]\geq\tr[\varrho^2]-4\sqrt{\det(\varrho)}$. Let $\Phi_{A\to B}$ be the Choi channel of $\varrho$, i.e.,\ $\varrho=S_{\ket{\Omega_A}}(\Phi_{A\to B})$ for a standard purification $\ket{\Omega_A}$ of ${\rm tr}_B[\varrho]$. The right-hand side of the above inequality can be written entirely in terms of the spectrum of $\varrho$ and, thus, in terms of the probability vector $\vec{\lambda}_{\varrho_A}^{\Phi_{A\to B}}$. Moreover, ${\rm tr}_A[\varrho]=\Phi_{A\to B}(\varrho_A)$. Thus, we have the following:

\begin{observation}\label{theor:self-comp}
A qubit-to-qubit channel $\Phi$ is 2-extendible if, for some (and, hence, for any) full-rank qubit state $\varrho_A$,
$$
\tr[\Phi(\varrho_A)^2]\geq\sum_{i}\lambda_{\varrho_A}^\Phi(i)^2-4\prod_{i}\sqrt{\lambda_{\varrho_A}^\Phi(i)}.
$$
\end{observation}

In particular, choosing $\varrho_A=\frac{1}{2}\id$ and a Hilbert-Schmidt-orthogonal set $\{K_i\}_{i}^R$ of Kraus operators for $\Phi$, $R\leq4$ being the Kraus rank of $\Phi$, the channel $\Phi$ is 2-extendible if and only if
$$
\tr[\Phi(\id)^2]\geq\sum_{i}\|K_i\|_{\rm HS}^4-\frac{16}{2^{R/2}}\prod_{i}\|K_i\|_{\rm HS}
$$
where, for any qubit operator $K$, the Hilbert-Schmidt norm is defined as $\|K\|_{\rm HS}:=\sqrt{\tr[K^\dag K]}$.

\section{Proof of Observation \ref{theor:Wstates}}
\label{appE}

\begin{proof}
Suppose that $\ket{\Omega_A}$ is a standard purifying vector for $\varrho_A$. It easily follows that there are unitaries $U_{B_1}$ and $U_{B_2}$ such that $\ket{\fii_{AB_1}}=(\id_d\otimes U_{B_1})\ket{\Omega_A}$ and $\ket{\fii_{AB_2}}=(\id_d\otimes U_{B_2})\ket{\Omega_A}$. Clearly the original states are margins of a tripartite state if and only if there is a tripartite state $\varrho_{AB_1B_2}$ such that $\varrho_{AB_1}=\varrho_\mu$ and $\varrho_{AB_2}=\varrho_\nu$ where $\varrho_\lambda:=(1-\lambda)\dyad{\Omega_A}+\lambda\frac{1}{d}\varrho_A\otimes\id_d$ for all $\lambda\in[0,1]$. Using the channel state dualism $S_{\ket{\Omega_A}}$, this problem is equivalent with finding those $\mu,\,\nu\in[0,1]$ such that the channels $\varrho\mapsto (1-\mu)\varrho+\mu\frac{1}{d}\id_d$ and $\varrho\mapsto (1-\nu)\varrho+\nu\frac{1}{d}\id_d$ are compatible. This happens, according to the above, if and only if the inequality \eqref{ineq:munu} holds.
\end{proof}

\section{Multipartite Gaussian marginal problems and compatibility questions}

We fix natural numbers $N_A$, $N_{B_1}$,\ldots, and $N_{B_n}$ and Hilbert spaces $\hil_A:=L^2(\R^{N_A})$ and $\hil_{B_i}:=L^2(\R^{N_{B_i}})$, $i=1,\ldots,\,n$. Whenever $C\subseteq\{A,B_1,\ldots,B_n\}$ and $\vec{q},\,\vec{p}\in\R^{N_C}$ we define the operator
$$
R_C(\vec{q},\vec{p}):=-\vec{q}^T\vec{P}_C+\vec{p}^T\vec{Q}_C
$$
where $\vec{Q}_C$ and $\vec{P}_C$ are the canonically conjugated position and momentum operators of the subsystem $C$ in vector form. In the sequel, whenever we have linear combinations of these operators, they are understood as the uniquely defined operators defined on the intersection of the domains of all these operators which is a dense subspace of $L^2(\R^{N_C})$. We also define
\begin{eqnarray*}
R(\vec{z})&:=&R_A(\vec{z}_A)\otimes\id_{B_1}\otimes\cdots\otimes\id_{B_n}+\id_A\otimes R_{B_1}(\vec{z}_{B_1})\otimes\id_{B_2}\otimes\cdots\otimes\id_{B_n}+\cdots\\
&&\cdots+\id_A\otimes\id_{B_1}\otimes\cdots\otimes\id_{B_{n-1}}\otimes R_{B_n}(\vec{z}_{B_n})
\end{eqnarray*}
for all $\vec{z}_A\in\R^{2N_A}$ and $\vec{z}_{B_i}\in\R^{2N_{B_i}}$, $i=1,\ldots,\,n$. We define the field operators $R_C(\vec{z})$ similarly for any subsystem $C$.

Any state on any of the subsystems or on the entire system is called {\it Gaussian} if its characteristic function is a Gaussian function. This means that a Gaussian state $\varrho$ over, say, system $C$ is such that the means $\langle R_C(\vec z)\rangle_\varrho$ and $\langle R_C(\vec{z})^2\rangle_\varrho$ are defined for all $\vec{z}\in\R^{2N_C}$ where $N_C$ is the sum of the number of modes of the individual subsystems in $C$, and the displacement vector is defined through
$$
\langle R_C(\vec{z})\rangle_\varrho=\vec{d}_\varrho^T\vec{S}_{C}\vec{z}, \quad\vec{z}\in\R^{2N_C},
$$
and the covariance matrix is defined through
$$
\langle\big(R_C(\vec{z})-\vec{d}_\varrho^T\vec{S}_{C}\vec{z} \id_C\big)^2\rangle_\varrho=\frac{1}{4}\vec{z}^T\vec{C}_\varrho
\vec{z}, \quad\vec{z}\in\R^{2N_C},
$$
where $\vec{S}_C=\bigoplus_{D\in C}\vec{S}_D$. The covariance matrix necessarily satisfies $\vec{C}_\varrho+i\vec{S}_C\geq0$.

We next present and prove a useful lemma. Reading the proof, one easily sees that the following lemma can be generalized to the extent that, for Hilbert spaces $\hil_1$ and $\hil_2$ and self-adjoint (unbounded) operators $A_i$ on $\hil_i$, $i=1,\,2$, and any state $\sigma$ on $\hil_1\otimes\hil_2$ such that the second moment of $A_i$ is defined in the marginal state $\sigma_{(i)}$, $i=1,\,2$, also the expectation value $\langle A\otimes B\rangle_\sigma$ is defined. However, we only need the following special case of this more general fact.

\begin{lemma}\label{lemma:ExpMarg}
Suppose that $\sigma$ is a state over $B_i\&B_j$ with $i<j$ such that $\langle R_{B_i}(\vec{z})^2\rangle_{\sigma_{B_i}}$ and $\langle R_{B_j}(\vec{w})^2\rangle_{\sigma_{B_j}}$ are defined for any $\vec{z}\in\R^{2N_{B_i}}$ and $\vec{w}\in\R^{2N_{B_j}}$. It follows that $\langle R_{B_i}(\vec{z})\otimes R_{B_j}(\vec{w})\rangle_{\sigma}$ is defined for all $\vec{z}\in\R^{2N_{B_i}}$ and $\vec{w}\in\R^{2N_{B_j}}$.
\end{lemma}

\begin{proof}
Pick $\vec{z}\in\R^{2N_{B_i}}$ and $\vec{w}\in\R^{2N_{B_j}}$. Denote by $\mathsf{P}_i$ the spectral measure of $R_{B_i}(\vec{z})$, by $\mathsf{P}_j$ the spectral measure of $R_{B_j}(\vec{w})$, by $\mathsf{P}_{i,j}$ the spectral measure of $R_{B_i}(\vec{z})\otimes R_{B_j}(\vec{w})$ and by $\mathsf{P}_i\otimes\mathsf{P}_j$ the PVM (projection-valued measure) on $\R^2$ such that $(\mathsf{P}_i\otimes\mathsf{P}_j)(X\times Y)=\mathsf{P}_i(X)\otimes\mathsf{P}_j(Y)$ for all Borel (or Lebesgue) subsets $X$ and $Y$ of $\R$. Using the fact that $\mathsf{P}_{i,j}=(\mathsf{P}_i\otimes\mathsf{P}_j)\circ f^{-1}$, where $f:\R^2\to\R$ is the multiplication map $f(x,y)=xy$, and the Cauchy-Schwarz inequality, we have
\begin{eqnarray*}
\int_\R |z|\,\tr[\sigma\mathsf{P}_{i,j}(dz)]&=&\int_{\R^2}|xy|\,\tr[\sigma(\mathsf{P}_i\otimes\mathsf{P}_j)\big(d(x,y)\big)]\\
&\leq&\sqrt{\int_{\R^2}x^2\,\tr[\sigma(\mathsf{P}_i\otimes\mathsf{P}_j)\big(d(x,y)\big)]\,\int_{\R^2}y^2\,\tr[\sigma(\mathsf{P}_i\otimes\mathsf{P}_j)\big(d(x,y)\big)]}\\
&=&\sqrt{\int_\R x^2\,\tr[\sigma\big(\mathsf{P}_i(dx)\otimes\id_{B_j}\big)]\,\int_\R y^2\,\tr[\sigma\big(\id_{B_i}\otimes\mathsf{P}_j(dy)\big)]}\\
&=&\sqrt{\int_\R x^2\tr[\sigma_{B_i}\mathsf{P}_i(dx)]\,\int_\R y^2\,\tr[\sigma_{B_j}\mathsf{P}_j(dy)]}\\
&=&\sqrt{\langle R_{B_i}(\vec{z})^2\rangle_{\sigma_{B_i}}\langle R_{B_j}(\vec{w})^2\rangle_{\sigma_{B_j}}}<\infty.
\end{eqnarray*}
This means that the expectation value $\langle R_{B_i}(\vec{z})\otimes R_{B_j}(\vec{w})\rangle_{\sigma}$ is defined.
\end{proof}

Using the above lemma and the quantum central limit theorem, we may prove the following theorem stating that the marginal problem involving Gaussian states sharing a common subsystem has a solution if and only if it has a solution in the form of a Gaussian state over the entire system. This is a generalization of Observation \ref{obs:GaussianMarg}.

\begin{theorem}\label{theor:GaussianMarg}
The marginal problem involving, for each $i=1,\ldots,\,n$, a Gaussian state $\varrho_i$ over the system $A\&B_i$ has a solution if and only if it has a Gaussian solution, i.e.,\ there is a Gaussian state $\varrho^G$ over $A\&B_1\&\cdots\&B_n$ such that $\varrho^G_{AB_i}=\varrho_i$ for all $i=1,\ldots,\,n$. Moreover, if, for each $i=1,\ldots,\,n$,
$$
\vec{C}_{\varrho_i}=\left(\begin{array}{cc}
    \vec{C}_A &\vec{X}_i  \\
    \vec{X}_i^T &\vec{C}_{B_i} 
\end{array}\right),
$$
the above marginal problem has a solution if and only if, for each $i,\,j\in\{1,\ldots,n\}$, $i<j$, there is a real $(2N_{B_i}\times 2N_{B_j})$-matrix $\vec{Y}_{i,j}$ such that
\begin{equation}\label{eq:GaussianMargCond}
\left(\begin{array}{ccccc}
    \vec{C}_A+i\vec{S}_A &\vec{X}_1 &\vec{X}_2 &\cdots &\vec{X}_n  \\
    \vec{X}_1^{T} &\vec{C}_{B_1}+i\vec{S}_{B_1} &\vec{Y}_{1,2} &\cdots &\vec{Y}_{1,n}  \\
    \vec{X}_2^T &\vec{Y}_{1,2}^T &\vec{C}_{B_2}+i\vec{S}_{B_2} &\cdots &\vec{Y}_{2,n}  \\
    \vdots &\vdots &\vdots &\ddots &\vdots  \\
    \vec{X}_n^T &\vec{Y}_{1,n}^T &\vec{Y}_{2,n}^T &\cdots &\vec{C}_{B_n}+i\vec{S}_{B_n}
\end{array}
\right)\geq0.
\end{equation}
\end{theorem}

\begin{proof}
Fix a state $\varrho$ over the entire system $A\&B_1\&\cdots\&B_n$ such that $\varrho_{AB_i}=\rho_i$ for all $i=1,\ldots,\,n$. We will prove the existence of the Gaussian state $\varrho^G$ of the claim using the quantum central limit theorem. In order to apply this theorem, we have to show that the first and, more importantly, the second moments of the quadrature operators are defined in the state $\rho$. We first look at the first moments and show that we may, essentially, disregard them. It immediately follows that $\langle R(\vec{z})\rangle_{\varrho}$ is defined for all $\vec{z}=(\vec{z}_A,\vec{z}_{B_1},\ldots,\vec{z}_{B_n})\in\R^{2(N_A+N_{B_1}+\cdots+N_{B_n})}$ since
$$
\langle R(\vec{z})\rangle_\varrho=\vec{d}_{\varrho_A}^T\vec{S}_A\vec{z}_A+\vec{d}_{\varrho_{B_1}}^T\vec{S}_{B_1}\vec{z}_{B_1}+\cdots+\vec{d}_{\varrho_{B_n}}^T\vec{S}_{B_n}\vec{z}_{B_n}.
$$
We may assume in this proof that $\vec{d}_{\varrho_i}=0$ for all $i=1,\ldots,\,n$. Indeed, if this is not the case and not all of these displacement vectors $\vec{d}_{\varrho_i}=:\vec{d}_i=(\vec{d}_A,\vec{d}_{B_i})$ are zero, the states $\tilde{\varrho}_i=W_{AB_i}(\vec{d}_i)\rho_iW_{AB_i}(\vec{d}_i)^*$ are easily found to have null displacement vectors; the operators $W_C(\vec{z})$ above are the displacement operators (or Weyl operators), $W_C(\vec{z})=e^{iR_C(\vec{z})}$ for all $C\subseteq\{A,B_1,\ldots,\,B_n\}$ and $\vec{z}\in\R^{N_C}$. It follows that the state $\tilde{\varrho}=W(\vec{d}_A,\vec{d}_{B_1},\ldots,\vec{d}_{B_n})\varrho W(\vec{d}_A,\vec{d}_{B_1},\ldots,\vec{d}_{B_n})^*$ is such that $\tilde{\varrho}_{AB_i}=\tilde{\varrho}_i$ for all $i=1,\ldots,\,n$ and that the displacement vector of $\tilde{\varrho}$ vanishes. Let us thus assume that the displacement vectors of $\varrho_i$ vanish and $\varrho$ is a state over the entire system $A\&B_1\&\cdots\&B_n$, whose displacement vector vanishes, such that $\varrho_{AB_i}=\varrho_i$ for all $i=1,\ldots,\,n$.

We next move on to the second moments and show that $\langle R(\vec{z})^2\rangle_\varrho$ is defined for all $\vec{z}\in\R^{2(N_A+N_{B_1}+\cdots+N_{N_n})}$. Pick $\vec{z}=(\vec{z}_A,\vec{z}_{B_1},\ldots,\vec{z}_{B_n})\in\R^{2(N_A+N_{B_1}+\cdots+N_{B_n})}$ where $\vec{z}_A\in\R^{2N_A}$ and $\vec{z}_{B_i}\in\R^{2N_{B_i}}$ for $i=1,\ldots,\,n$. Squaring the operator $R(\vec{z})$ and (na\"{i}vely) calculating the expectation value of this square in $\varrho$, we easily see that
\begin{eqnarray*}
\langle R(\vec{z})^2\rangle_\varrho&=&\langle R_A(\vec{z}_A)^2\rangle_{\varrho_A}+2\sum_{i=1}^n\langle R_A(\vec{z}_A)\otimes R_{B_i}(\vec{z}_{B_i})\rangle_{\varrho_{AB_i}}\\
&&+2\sum_{i,j:\,i<j}\langle R_{B_i}(\vec{z}_{B_i})\otimes R_{B_j}(\vec{z}_{B_j})\rangle_{\varrho_{B_iB_j}}+\sum_{i=1}^n\langle R_{B_i}(\vec{z}_{B_i})^2\rangle_{\varrho_{B_i}}.
\end{eqnarray*}
According to the preceding lemma, the terms in the second sum above are defined and, as the states $\rho_A=(\rho_i)_A$, $\rho_{AB_i}=\rho_i$, and $\rho_{B_i}=(\rho_i)_{B_i}$, $i=1,\ldots,\,n$, are Gaussian, the other terms are defined as well; recall that the reduced states of Gaussian states are Gaussian as well. Thus, we have $\langle R(\vec{z})^2\rangle_\varrho<\infty$.

We are now ready to employ the quantum central limit theorem to derive the desired Gaussian state $\varrho^G$ from $\varrho$. Denote by $\rho^G$ the Gaussian state with the covariance matrix $\vec{C}_\varrho$ (and null displacement vector). For any $n\in\N$, let $\varrho^{(n)}$ be the state over $A\&B_1\&\cdots\&B_n$ whose characteristic function is given by
$$
\tr[\varrho^{(n)}W(\vec{z})]=\chi_{\varrho^{(n)}}(\vec{z})=\chi_\varrho(\vec{z}/\sqrt{n})^n
$$
for all $\vec{z}\in\R^{2(N_A+N_{B_1}+\cdots+N_{B_n})}$. According to the quantum central limit theorem presented in \cite{CuHu1971}, these characteristic functions converge pointwise to $\chi_{\varrho^G}$ as $n\to\infty$ and, since the Weyl operators span an ultraweakly dense operator system of the algebra of bounded operators over our system, it follows that the sequence $(\varrho^{(n)})_{n=1}^\infty$ converges to $\varrho^G$ $\sigma$-weakly, i.e.,\ for all bounded linear operators $D$ over the system, $\lim_{n\to\infty}\tr[\varrho^{(n)}D]=\tr[\varrho^G D]$. Since the evaluation functions
$$
\sigma\mapsto\tr[\sigma(D_A\otimes\id_{B_1}\otimes\cdots\otimes\id_{B_{i-1}}\otimes D_{B_i}\otimes\id_{B_{i+1}}\otimes\cdots\otimes\id_{B_n}]
$$
for all bounded operators $D_A$ over system $A$ and $D_{B_i}$ over system $B_i$, $i=1,\ldots,\,n$, are $\sigma$-weakly continuous, it immediately follows that the set of those states of the entire system $A\&B_1\&\cdots\&B_n$ with the fixed bipartite margins $\varrho_1,\ldots,\,\varrho_n$ is $\sigma$-weakly closed (as the intersection of preimages of the above evaluation functions) and thus $\varrho^G$ also has the required margins since $\varrho^{(n)}$ has the bipartite margins $\varrho_i$, $i=1,\ldots,\,n$, for all $n\in\N$ as can be easily seen (to convince oneself of this, see the following calculation). However, let us construct a more direct proof for this. We have, for any $i\in\{1,\ldots,n\}$, $\vec{z}_A\in\R^{2N_A}$, and $\vec{z}_{B_i}\in\R^{2N_{B_i}}$,
\begin{align*}
&\tr[\varrho^G_{AB_i}W_{AB_i}(\vec{z}_A,\vec{z}_{B_i})]\\
=&\tr[\varrho^G\big(W_A(\vec{z}_A)\otimes\id_{B_1}\otimes\cdots\otimes\id_{B_{i-1}}\otimes W_{B_i}(\vec{z}_{B_i})\otimes\id_{B_{i+1}}\otimes\cdots\otimes\id_{B_n}\big)]\\
=&\tr[\varrho^G W(\vec{z}_A,0,\ldots,0,\vec{z}_{B_i},0,\ldots,0)]=\chi_{\varrho^G}(\vec{z}_A,0,\ldots,0,\vec{z}_{B_i},0,\ldots,0)\\
=&\lim_{n\to\infty}\chi_{\varrho^{(n)}}(\vec{z}_A,0,\ldots,0,\vec{z}_{B_i},0,\ldots,0)=\lim_{n\to\infty}\chi_{\varrho}(n^{-1/2}\vec{z}_A,0,\ldots,0,n^{-1/2}\vec{z}_{B_i},0,\ldots,0)^n\\
=&\lim_{n\to\infty}\tr[\varrho W(n^{-1/2}\vec{z}_A,0,\ldots,0,n^{-1/2}\vec{z}_{B_i},0,\ldots,0)]^n\\
=&\lim_{n\to\infty}\tr[\varrho_{AB_i}W_{AB_i}(n^{-1/2}\vec{z}_A,n^{-1/2}\vec{z}_{B_i})]^n=\lim_{n\to\infty}\tr[\varrho_i W_{AB_i}(n^{-1/2}\vec{z}_A,n^{-1/2}\vec{z}_{B_i})]^n\\
=&\lim_{n\to\infty}\chi_{\varrho_i}(n^{-1/2}\vec{z}_A,n^{-1/2}\vec{z}_{B_i})^n=\lim_{n\to\infty}\chi_{\varrho_i}(\vec{z}_A,\vec{z}_{B_i})=\chi_{\varrho_i}(\vec{z}_A,\vec{z}_{B_i})=\tr[\varrho_i W_{AB_i}(\vec{z}_A,\vec{z}_{B_i})],
\end{align*}
where we have used the fact that $G(n^{-1/2}\vec{z}_A,n^{-1/2}\vec{z}_{B_i})^n=G(\vec{z}_A,\vec{z}_{B_i})$ for a Gaussian distribution $G$ with a vanishing mean in the third-to-last equality above. Since the Weyl operators in the subsystem $A\&B_i$ span an ultraweakly dense operator system, we have $\varrho^G_{AB_i}=\varrho_i$. This concludes the proof for the existence of the state $\varrho^G$ of the claim.

We finally prove the last claim regarding the covariance matrices of $\varrho_i$ and the condition for the solvability of the marginal problem they set up in Inequality \ref{eq:GaussianMargCond}. Let $\vec{C}_{\varrho_i}$ be as in the claim for all $i=1,\ldots,\,n$ and suppose $\varrho^G$ is the Gaussian solution for the marginal problem. It follows that, for each $i,\,j\in\{1,\ldots,n\}$, $i<j$, there is a real $(2N_{B_i}\times 2N_{B_j})$-matrix $\vec{Y}_{i,j}$ such that
$$
\vec{C}_{\varrho^G}=\left(\begin{array}{ccccc}
    \vec{C}_A &\vec{X}_1 &\vec{X}_2 &\cdots &\vec{X}_n  \\
    \vec{X}_1^T &\vec{C}_{B_1} &\vec{Y}_{1,2} &\cdots &\vec{Y}_{1,n}  \\
    \vec{X}_2^T &\vec{Y}_{1,2}^T &\vec{C}_{B_2} &\cdots &\vec{Y}_{2,n}  \\
    \vdots &\vdots &\vdots &\ddots &\vdots  \\
    \vec{X}_n^T &\vec{Y}_{1,n}^T &\vec{Y}_{2,n}^T &\cdots &\vec{C}_{B_n}
\end{array}
\right);
$$
this follows in a straight-forward manner by checking the covariance matrices of the $A\& B_i$-marginals of $\varrho^G$. This is an allowed covariance matrix if and only if $\vec{C}_{\varrho^G}+i\vec{S}_A\oplus\vec{S}_{B_1}\oplus\cdots\oplus\vec{S}_{B_n}\geq0$ which is equivalent to Equation (\ref{eq:GaussianMargCond}).
\end{proof}

Let us go on to studying Gaussian channels. Recall that matrices $\vec{K}$ and $\vec{L}$ and a vector $\vec{m}$ define a Gaussian channel $\Phi_{A\to B}=\Phi_{\vec{K},\vec{L},\vec{m}}$ if and only if
\begin{equation}\label{eq:GaussianChCond}
\vec{K}+i\vec{S}_B-i\vec{L}^T\vec{S}_A\vec{L}\geq0.
\end{equation}
We also recall that the Gaussian channel $\Phi_{\vec{K},\vec{L},\vec{m}}$ is also characterized by the Weyl operator condition for the Heisenberg dual:
\begin{equation}\label{eq:GaussianWeyl}
\Phi_{\vec{K},\vec{L},\vec{m}}^*\big(W_B(\vec{z})\big)=e^{-\frac{1}{4}\vec{z}^T\vec{K}\vec{z}+i\vec{m}^T\vec{z}}W_A(\vec{L}\vec{z}),\qquad\vec{z}\in\R^{2N_B}.
\end{equation}

According to \cite{HoWe2001}, for any Gaussian state $\varrho_A$, there is Gaussian purification, i.e.,\ a purifying vector $\ket{\Omega_A}$ such that $\dyad{\Omega_A}$ is a Gaussian state. Since the identity channel is trivially Gaussian and the tensor product of two Gaussian channels is Gaussian (as can be easily verified), it follows that $({\rm id}_A\otimes\Phi_{A\to B})(\dyad{\Omega_A})$ is Gaussian. Especially, when $\varrho_A$ above is faithful, we obtain a state-channel dualism that maps Gaussian channels into Gaussian states.

Let us fix $\alpha>1$ and denote $\lambda:=\ln{\left(\frac{\alpha+1}{\alpha-1}\right)}$. Define the Hamiltonian of the Harmonic oscillator on system $A$ as $H_{\rm osc}=\frac{1}{2}(\|\vec{Q}\|^2+\|\vec{P}\|^2)$ where $\vec{Q}$ and $\vec{P}$ are the position and momentum operators of system $A$ in vector form. We fix the faithful (sic) state $\varrho_A:=2\sinh{(\lambda/2)}\,e^{-\lambda H_{\rm osc}}$ the characteristic function of which is given by $\tr[\varrho_A W_A(\vec{z})]=\chi_{\varrho_A}(\vec{z})=e^{-\frac{1}{4}\alpha\|\vec{z}\|^2}$ for all $\vec{z}\in\R^{2N_A}$ \cite{GaZo}. This means that $\varrho_A$ is Gaussian and is defined by a vanishing displacement vector and the covariance matrix $\alpha\id_{2N_A}$. Denote by $\vec{Z}$ the block-diagonal $(2N_A\times 2N_A)$-matrix with the $(2\times 2)$-blocks
$$
\sigma_z:=\left(\begin{array}{cc}
1&0\\
0&-1
\end{array}\right)
$$
on the diagonal. Following \cite{HoWe2001}, we may give $\varrho_A$ the purification $\ket{\Omega_A}\in\hil_A\otimes\hil_A$ such that $\dyad{\Omega_A}$ is associated with the covariance matrix
\begin{equation}\label{eq:C0}
\vec{C}_0:=\left(\begin{array}{cc}
\alpha\id_{2N_A}&\sqrt{\alpha^2-1}\vec{Z}\\
\sqrt{\alpha^2-1}\vec{Z}&\alpha\id_{2N_A}
\end{array}\right).
\end{equation}
Indeed, one may easily verify that $\big((\vec{S}_A\otimes\vec{S}_A)^T\vec{C}_0\big)^2=-\id_{4N_A}$ implying that $\vec{C}_0$ corresponds to a pure state. We keep the above state $\varrho_A$, the purification $\ket{\Omega_A}$, and the matrix $\vec{C}_0$ fixed in the proof of the following result which generalizes Observation \ref{obs:GaussianComp}.

\begin{theorem}\label{theor:GaussComp}
Gaussian channels $\Phi_i=\Phi_{\vec{K}_i,\vec{L}_i,\vec{m}_i}$ (with input system $A$ and output system $B_i$), $i=1,\ldots,\,n$, are compatible if and only if, for any $i,\,j\in\{1,\ldots,n\}$, $i<j$, there is a real $(2N_{B_i}\times 2N_{B_j})$-matrix $\vec{Z}_{i,j}$ such that
\begin{equation}\label{eq:CompMatrixCond}
\left(\begin{array}{cccc}
\vec{K}_1+i\vec{S}_{B_1}-i\vec{L}_1^T\vec{S}_A\vec{L}_1&\vec{Z}_{1,2}-i\vec{L}_1^T\vec{S}_A\vec{L}_2&\cdots&\vec{Z}_{1,n}-i\vec{L}_1^T\vec{S}_A\vec{L}_n\\
\vec{Z}_{1,2}^T+i\vec{L}_2^T\vec{S}_A^T\vec{L}_1&\vec{K}_2+i\vec{S}_{B_2}-i\vec{L}_2^T\vec{S}_A\vec{L}_2&\cdots&\vec{Z}_{2,n}-i\vec{L}_2^T\vec{S}_A\vec{L}_n\\
\vdots&\vdots&\ddots&\vdots\\
\vec{Z}_{1,n}^T+i\vec{L}_n^T\vec{S}_A^T\vec{L}_1&\vec{Z}_{2,n}^T+i\vec{L}_n^T\vec{S}_A^T\vec{L}_2&\cdots&\vec{K}_n+i\vec{S}_{B_n}-i\vec{L}_n^T\vec{S}_A\vec{L}_n
\end{array}\right)\geq0.
\end{equation}
Moreover, if $\Phi_i$, $i=1,\ldots,\,n$, are compatible, they have a Gaussian joint channel.
\end{theorem}

\begin{proof}
Denote $S_i:=S_{\ket{\Omega_A}}(\Phi_i)$ for $i=1,\ldots,\,n$. For any $i=1,\ldots,\,n$, we have ${\rm id}_{A}\otimes\Phi_i=\Phi_{\vec{K}'_i,\vec{L}'_i,\vec{m}'_i}$ where $\vec{K}'_i:=0\oplus \vec{K}_i$, $\vec{L}'_i:=\id_{2N_A}\oplus \vec{L}_i$, and $\vec{m}'_i:=0\oplus\vec{m}_i$. It follows that the covariance matrix of $S_i$ is
\begin{equation}\label{eq:Ci}
C_i:=(\vec{L}'_i)^T\vec{C}_0\vec{L}'_i+\vec{K}'_i=\left(\begin{array}{cc}
\alpha\id_{2N_A}&\sqrt{\alpha^2-1}\,\vec{Z}\vec{L}_i\\
\sqrt{\alpha^2-1}\,\vec{L}_i^T\vec{Z}&\alpha \vec{L}_i^T\vec{L}_i+\vec{K}_i
\end{array}\right)
\end{equation}
for all $i=1,\ldots,\,n$. The channels $\Phi_i$, $i=1,\ldots,\,n$ are compatible if and only if the marginal problem involving the states $S_i$, $i=1,\ldots,\,n$, has a solution. This is equivalent, according to Theorem \ref{theor:GaussianMarg}, to the existence, for all $i,\,j\in\{1,\ldots,\,n\}$, $i<j$, of a real $(2N_{Bi}\times 2N_{B_j})$-matrix $Y_{i,j}$ such that
\begin{equation}\label{eq:apu1}
\left(\begin{array}{cc}
\alpha\id_{2N_A}+i\vec{S}_A&\vec{X}\\
\vec{X}^T&\vec{Y}
\end{array}\right)\geq0
\end{equation}
where we have denoted by $\vec{X}$ the horizontal row $\sqrt{\alpha^2-1}(\vec{Z}\vec{L}_1\,\cdots\,\vec{Z}\vec{L}_n)$ of blocks and
$$
\vec{Y}:=\left(\begin{array}{cccc}
\alpha \vec{L}_1^T\vec{L}_1+\vec{K}_1+i\vec{S}_{B_1}&\vec{Y}_{1,2}&\cdots&\vec{Y}_{1,n}\\
\vec{Y}_{1,2}^T&\alpha \vec{L}_2^T\vec{L}_2+\vec{K}_2+i\vec{S}_{B_2}&\cdots&\vec{Y}_{1,n}\\
\vdots&\vdots&\ddots&\vdots\\
\vec{Y}_{1,n}^T&\vec{Y}_{2,n}^T&\cdots&\alpha \vec{L}_n^T\vec{L}_n+\vec{K}_n+i\vec{S}_{B_n}
\end{array}\right).
$$
Since the matrix $\alpha\id_{2N_A}+i\vec{S}_A$ is positive and invertible (guaranteed by $\alpha>1$), this is equivalent with the Schur complement of $\alpha\id_{2N_A}+i\Omega_A$ in the matrix appearing on the LHS of Inequality \eqref{eq:apu1} being positive, i.e.,\ with $\vec{Y}-\vec{X}^T(\alpha\id_{2N_A}+i\vec{S}_A)^{-1}\vec{X}\geq0$. Using the easily verifiable fact that
$$
\vec{Z}(\alpha\id_{2N_A}+i\vec{S}_A)^{-1}\vec{Z}=(\alpha\id_{2N_A}-i\vec{S}_A)^{-1}=\frac{1}{\alpha^2-1}(\alpha\id_{2N_A}+i\vec{S}_A),
$$
one immediately obtains $\vec{X}^T(\alpha\id_{2N_A}+i\vec{S}_A)^{-1}\vec{X}=\big(\vec{L}_i^T(\alpha\id_{2N_A}+i\vec{S}_A)\vec{L}_j\big)_{i,j=1}^n$. Defining, for all $i,\,j\in\{1,\ldots,\,n\}$, $i<j$, the real $(2N_{B_i}\times 2N_{B_j})$-matrix $\vec{Z}_{i,j}:=\vec{Y}_{i,j}-\alpha \vec{L}_i^T\vec{L}_j$, the inequality $\vec{Y}-\vec{X}^T(\alpha\id_{2N_A}+i\vec{S}_A)^{-1}\vec{X}\geq0$ becomes Inequality \eqref{eq:CompMatrixCond}. We immediately obtain the necessary and sufficient condition for the compatibility of $\Phi_i$, $i=1,\ldots,\,n$, as stated in the claim.

Assume now that $\Phi_i$, $i=1,\ldots,\,n$, are compatible, i.e.,\ for all $i,\,j\in\{1,\ldots,n\}$, $i<j$, there is a real $(2N_{B_i}\times 2N_{B_j})$-matrix $\vec{Z}_{i,j}$ such that Inequality \eqref{eq:CompMatrixCond} is satisfied. Define the Gaussian channel $\Psi:=\Phi_{\vec{K},\vec{L},\vec{m}}$ (with the input system $A$ and output system $B_1\&\cdots\&B_n$) by setting
$$
\vec{K}:=\left(\begin{array}{cccc}
\vec{K}_1&\vec{Z}_{1,2}&\cdots&\vec{Z}_{1,n}\\
\vec{Z}_{1,2}^T&\vec{K}_2&\cdots&\vec{Z}_{2,n}\\
\vdots&\vdots&\ddots&\vdots\\
\vec{Z}_{1,n}^T&\vec{Z}_{2,n}^T&\cdots&\vec{K}_n
\end{array}\right),
$$
$\vec{L}:=(\vec{L}_1\,\cdots\,\vec{L}_n)$, and $\vec{m}:=\vec{m}_1\oplus\cdots\oplus\vec{m}_n$. This is, indeed, an allowed Gaussian channel as the inequality
$$
\vec{K}+i\vec{S}_{B_1}\oplus\cdots\oplus\vec{S}_{B_n}-i\vec{L}^T\vec{S}_A\vec{L}\geq0
$$
that the matrices $\vec{K}$ and $\vec{L}$ must satisfy is easily found to be the same as Inequality \eqref{eq:CompMatrixCond}. It is immediate that $\Psi_{A\to B_i}$ obtained from $\Psi$ by tracing out other output systems except for $B_i$ coincides with $\Phi_i$ for all $i=1,\ldots,\,n$ since, for any $i=1,\ldots,\,n$, $\vec{z}\in\R^{2N_{B_i}}$, and denoting by $\vec{w}\in\R^{2(N_{B_1}+\cdots+N_{B_n})}$ the vector $(0_1,\ldots,0_{i-1},\vec{z}_i,0_{i+1},\ldots,0_n)$, where $0_j$ is the zero-vector in $\R^{2N_{B_j}}$, $j=1,\ldots,\,n$, and by $W_B$ the Weyl representation associated with the subsystem $B_1\&\cdots\&B_n$,
\begin{eqnarray*}
\Psi_{A\to B_i}^*\big(W_{B_i}(\vec{z})\big)&=&\Psi^*\big(\id_{B_1}\otimes\cdots\otimes\id_{B_{i-1}}\otimes W_{B_i}(\vec{z})\otimes\id_{B_{i+1}}\otimes\cdots\otimes\id_{B_n}\big)\\
&=&\Psi^*\big(W_B(\vec{w})\big)=e^{-\frac{1}{4}\vec{w}^T\vec{K}\vec{w}+i\vec{m}^T\vec{w}}W_A(\vec{L}\vec{w})\\
&=&e^{-\frac{1}{4}\vec{z}^T\vec{K}_i\vec{z}+\vec{m}_i^T\vec{z}}W_A(\vec{L}_i\vec{z})=\Phi_i^*\big(W_{B_i}(\vec{z})\big)
\end{eqnarray*}
where we have used Equation \eqref{eq:GaussianWeyl}.
\end{proof}


\begin{thebibliography}{10}

\bibitem{heinosaari2016invitation}
Teiko Heinosaari, Takayuki Miyadera, and M{\'a}rio Ziman.
\newblock An invitation to quantum incompatibility.
\newblock \href{https://doi.org/10.1088/1751-8113/49/12/123001}{{\em J. Phys. A}, 49:123001, 2016}.

\bibitem{Quintino2014}
Marco~T\'ulio Quintino, Tam\'as V\'ertesi, and Nicolas Brunner.
\newblock Joint measurability, Einstein-Podolsky-Rosen steering, and Bell
  nonlocality.
\newblock \href{https://doi.org/10.1103/PhysRevLett.113.160402}{{\em Phys. Rev. Lett.}, 113:160402, 2014}.

\bibitem{Uola2014}
Roope Uola, Tobias Moroder, and Otfried G\"uhne.
\newblock Joint measurability of generalized measurements implies classicality.
\newblock \href{https://doi.org/10.1103/PhysRevLett.113.160403}{{\em Phys. Rev. Lett.}, 113:160403, 2014}.

\bibitem{uola2015one}
Roope Uola, Costantino Budroni, Otfried G{\"u}hne, and Juha-Pekka
  Pellonp{\"a}{\"a}.
\newblock One-to-one mapping between steering and joint measurability problems.
\newblock \href{https://doi.org/10.1103/PhysRevLett.115.230402}{{\em Phys. Rev. Lett.}, 115:230402, 2015}.

\bibitem{KiBuUoPe2017}
Jukka Kiukas, Constantino Budroni, Roope Uola, and Juha-Pekka Pellonp\"a\"a.
\newblock Continuous-variable steering and incompatibility via state-channel
  dualism.
\newblock \href{https://doi.org/10.1103/PhysRevA.96.042331}{{\em Phys. Rev. A}, 96:042331, 2017}.

\bibitem{Xu2019}
Zhen-Peng Xu and Ad\'an Cabello.
\newblock Necessary and sufficient condition for contextuality from
  incompatibility.
\newblock \href{https://doi.org/10.1103/PhysRevA.99.020103}{{\em Phys. Rev. A}, 99:020103, 2019}.

\bibitem{Tavakoli2019}
Armin Tavakoli and Roope Uola.
\newblock {M}easurement incompatibility and steering are necessary and
  sufficient for operational contextuality.
\newblock \href{https://doi.org/10.1103/PhysRevResearch.2.013011}{{\em Phys. Rev. Research}, 2:013011, 2020}.

\bibitem{Clemente2015}
Lucas Clemente and Johannes Kofler.
\newblock Necessary and sufficient conditions for macroscopic realism from
  quantum mechanics.
\newblock \href{https://doi.org/10.1103/PhysRevA.91.062103}{{\em Phys. Rev. A}, 91:062103, 2015}.

\bibitem{UoViCo2018}
Roope Uola, Giuseppe Vitagliano, and Costantino Budroni.
\newblock {L}eggett-{G}arg macrorealism and the quantum nondisturbance
  conditions.
\newblock \href{https://doi.org/10.1103/PhysRevA.100.042117}{{\em Phys. Rev. A}, 100:042117, 2019}.

\bibitem{CaHeTo2019}
Claudio Carmeli, Teiko Heinosaari, and Alessandro Toigo.
\newblock Quantum incompatibility witnesses.
\newblock \href{https://doi.org/10.1103/PhysRevLett.122.130402}{{\em Phys. Rev. Lett.}, 122:130402, 2019}.

\bibitem{SkSuCa2019}
Paul Skrzypczyk, Ivan \ifmmode \check{S}\else
  \v{S}\fi{}upi\ifmmode~\acute{c}\else \'{c}\fi{}, and Daniel Cavalcanti.
\newblock All sets of incompatible measurements give an advantage in quantum
  state discrimination.
\newblock \href{https://doi.org/10.1103/PhysRevLett.122.130403}{{\em Phys. Rev. Lett.}, 122:130403, 2019}.

\bibitem{uola2019a}
Roope Uola, Tristan Kraft, Jiangwei Shang, Xiao-Dong Yu, and Otfried G\"uhne.
\newblock Quantifying quantum resources with conic programming.
\newblock \href{https://doi.org/10.1103/PhysRevLett.122.130404}{{\em Phys. Rev. Lett.}, 122:130404, 2019}.

\bibitem{Michal2019}
Micha{\l{}} Oszmaniec and Tanmoy Biswas.
\newblock Operational relevance of resource theories of quantum measurements.
\newblock \href{https://doi.org/10.22331/q-2019-04-26-133}{{\em {Quantum}}, 3:133, 2019}.

\bibitem{Leo2019}
Leonardo Guerini, Marco~Túlio Quintino, and Leandro Aolita.
\newblock {D}istributed sampling, quantum communication witnesses, and
  measurement incompatibility.
\newblock \href{https://doi.org/10.1103/PhysRevA.100.042308}{{\em Phys. Rev. A}, 100:042308, 2019}.

\bibitem{Claudio2019}
Claudio Carmeli, Teiko Heinosaari, Takayuki Miyadera, and Alessandro Toigo.
\newblock {W}itnessing incompatibility of quantum channels.
\newblock \href{https://doi.org/10.1063/1.5126496}{{\em J. Math. Phys.}, 60:122202, 2019}.

\bibitem{Mori2019}
Junki Mori.
\newblock {O}perational characterization of incompatibility of quantum channels
  with quantum state discrimination.
\newblock \href{https://doi.org/10.1103/PhysRevA.101.032331}{{\em Phys. Rev. A}, 101:032331, 2020}.

\bibitem{uola2019b}
Roope {Uola}, Tristan {Kraft}, and Alastair~A. {Abbott}.
\newblock {Quantification of quantum dynamics with input-output games}.
\newblock \href{https://doi.org/10.1103/PhysRevA.101.052306}{{\em Phys. Rev. A}, 101:052306, 2020}.

\bibitem{Buscemi2019}
Francesco Buscemi, Eric Chitambar, and Wenbin Zhou.
\newblock Complete resource theory of quantum incompatibility as quantum programmability.
\newblock \href{https://doi.org/10.1103/PhysRevLett.124.120401}{{\em Phys. Rev. Lett.}, 124:120401, 2020}.

\bibitem{fine1982hidden}
Arthur Fine.
\newblock Hidden variables, joint probability, and the Bell inequalities.
\newblock \href{https://doi.org/10.1103/PhysRevLett.48.291}{{\em Phys. Rev. Lett.}, 48:291, 1982}.

\bibitem{Wolf2009}
Michael~M. Wolf, David Perez-Garcia, and Carlos Fernandez.
\newblock Measurements incompatible in quantum theory cannot be measured
  jointly in any other no-signaling theory.
\newblock \href{https://doi.org/10.1103/PhysRevLett.103.230402}{{\em Phys. Rev. Lett.}, 103:230402, 2009}.

\bibitem{ruskai1969n}
Mary~Beth Ruskai.
\newblock $N$-representability problem: Conditions on geminals.
\newblock \href{https://doi.org/10.1103/PhysRev.183.129}{{\em Phys. Rev.}, 183:129, 1969}.

\bibitem{national1995mathematical}
{National Research Council}.
\newblock {\em Mathematical Challenges from Theoretical/Computational
  Chemistry}.
\newblock The National Academies Press, Washington, DC, 1995.

\bibitem{Liu2006}
Yi-Kai Liu.
\newblock Consistency of local density matrices is QMA-complete.
\newblock \href{https://doi.org/10.1007/11830924_40}{In {\em Approximation, Randomization, and Combinatorial Optimization. Algorithms and Techniques}, Lecture Notes in Computer Science, vol 4110. Springer, Berlin, Heidelberg, 2006}.

\bibitem{Coffman2000}
Valerie Coffman, Joydip Kundu, and William~K. Wootters.
\newblock Distributed entanglement.
\newblock \href{https://doi.org/10.1103/PhysRevA.61.052306}{{\em Phys. Rev. A}, 61:052306, 2000}.

\bibitem{coleman1963structure}
A.~J. Coleman.
\newblock Structure of fermion density matrices.
\newblock \href{https://doi.org/10.1103/RevModPhys.35.668}{{\em Rev. Mod. Phys.}, 35:668, 1963}.

\bibitem{klyachko2004quantum}
Alexander Klyachko.
\newblock Quantum marginal problem and representations of the symmetric group.
\newblock \href{https://arxiv.org/abs/quant-ph/0409113}{arXiv:quant-ph/0409113, 2004}.

\bibitem{klyachko2006quantum}
Alexander~A Klyachko.
\newblock Quantum marginal problem and $N$-representability.
\newblock \href{https://doi.org/10.1088/1742-6596/36/1/014}{{\em J. Phys. Conf. Ser.}, 36:72, 2006}.

\bibitem{schilling2017reconstructing}
Christian Schilling, Carlos~L. Benavides-Riveros, and P\'eter Vrana.
\newblock Reconstructing quantum states from single-party information.
\newblock \href{https://doi.org/10.1103/PhysRevA.96.052312}{{\em Phys. Rev. A}, 96:052312, 2017}.

\bibitem{Chiribella2011}
Giulio Chiribella.
\newblock On quantum estimation, quantum cloning and finite quantum de Finetti
  theorems.
\newblock \href{https://doi.org/10.1007/978-3-642-18073-6_2}{In {\em Theory of Quantum Computation, Communication, and
  Cryptography}, Lecture Notes in Computer Science, vol 6519. Springer, Berlin, Heidelberg, 2011}.

\bibitem{BaeAcin2006}
Joonwoo Bae and Antonio Ac\'{\i}n.
\newblock Asymptotic quantum cloning is state estimation.
\newblock \href{https://doi.org/10.1103/PhysRevLett.97.030402}{{\em Phys. Rev. Lett.}, 97:030402, 2006}.

\bibitem{Stormer69}
Erling St\o{}rmer.
\newblock Symmetric states of infinite tensor products of $C^*$-algebras.
\newblock \href{https://doi.org/10.1016/0022-1236(69)90050-0}{{\em J. Funct. Anal.}, 3:48, 1969}.

\bibitem{RaggioWerner89}
G.~A. Raggio and Reinhard~F. Werner.
\newblock Quantum statistical mechanics of general mean field systems.
\newblock \href{http://doi.org/10.5169/seals-116175}{{\em Helv. Phys. Acta}, 62:980, 1989}.


\bibitem{HeMi2017}
Teiko Heinosaari and Takayuki Miyadera.
\newblock Incompatibility of quantum channels.
\newblock \href{ttps://doi.org/10.1088/1751-8113/48/43/435301}{{\em J. Phys. A}, 50:135302, 2017}.

\bibitem{Pankowski2013}
Lukasz Pankowski, Fernando G.S.L. Brandao, Michal Horodecki, Graeme Smith.
\newblock {Entanglement distillation by extendible maps.}
\newblock \href{https://doi.org/10.26421/QIC13.9-10-2}{{\em Q. Inf. Comp.}, 13:751, 2013}.


\bibitem{Berta2018}
Mario Berta, Francesco Borderi, Omar Fawzi, Volkher Scholz.
\newblock {Semidefinite programming hierarchies for quantum error correction.}
\newblock \href{https://arxiv.org/abs/1810.12197}{{arXiv:1810.12197}, 2018}.

\bibitem{Kaur2019}
Eneet Kaur, Siddhartha Das, Mark M. Wilde, and Andreas Winter.
\newblock {Extendibility limits the performance of quantum processors.}
\newblock \href{https://doi.org/10.1103/PhysRevLett.123.070502}{{\em Phys. Rev. Lett.}, 123:070502, 2019}.


\bibitem{Haapasalo2019b}
Erkka Haapasalo.
\newblock The Choi-Jamiolkowski isomorphism and covariant quantum channels.
\newblock \href{https://arxiv.org/abs/1906.11442}{arXiv:1906.11442, 2019}.

\bibitem{1907.02521}
Xiao Yuan, Yunchao Liu, Qi~Zhao, Bartosz Regula, Jayne Thompson, and Mile Gu.
\newblock Universal and operational benchmarking of quantum memories.
\newblock \href{https://arxiv.org/abs/1907.02521}{arXiv:1907.02521, 2019}.


\bibitem{Doherty2002}
Andrew~C. Doherty, Pablo~A. Parrilo, and Federico~M. Spedalieri.
\newblock Distinguishing separable and entangled states.
\newblock \href{https://doi.org/10.1103/PhysRevLett.88.187904}{{\em Phys. Rev. Lett.}, 88:187904, 2002}.

\bibitem{Doherty2004}
Andrew~C. Doherty, Pablo~A. Parrilo, and Federico~M. Spedalieri.
\newblock Complete family of separability criteria.
\newblock \href{https://doi.org/10.1103/PhysRevA.69.022308}{{\em Phys. Rev. A}, 69:022308, 2004}.


\bibitem{PPT}
Micha\l{} Horodecki, Pawe\l{} Horodecki, and Ryszard Horodecki.
\newblock Separability of mixed states: necessary and sufficient conditions.
\newblock \href{https://doi.org/10.1016/S0375-9601(01)00142-6}{{\em Phys. Lett. A}, 223:1, 1996}.

\bibitem{Sun_2018}
Kai Sun, Xiang-Jun Ye, Ya~Xiao, Xiao-Ye Xu, Yu-Chun Wu, Jin-Shi Xu, Jing-Ling
  Chen, Chuan-Feng Li, and Guang-Can Guo.
\newblock Demonstration of Einstein–Podolsky–Rosen steering with enhanced
  subchannel discrimination.
\newblock \href{https://doi.org/10.1038/s41534-018-0067-1}{{\em npj Quantum Information}, 4:12, 2018}.

\bibitem{Zheng2018}
Wenqiang Zheng, Zhihao Ma, Hengyan Wang, Shao-Ming Fei, and Xinhua Peng.
\newblock Experimental demonstration of observability and operability of
  robustness of coherence.
\newblock \href{https://doi.org/10.1103/PhysRevLett.120.230504}{{\em Phys. Rev. Lett.}, 120:230504, 2018}.

\bibitem{boyd2004convex}
Stephen Boyd and Lieven Vandenberghe.
\newblock {\em Convex optimization}.
\newblock Cambridge University Press, Cambridge, 2004.

\bibitem{Gaertner2012}
Bernd G\"artner and Ji\v{r}\'i Matou\v{s}ek.
\newblock {\em Approximation Algorithms and Semidefinite Programming}.
\newblock Springer Verlag, Berlin, Heidelberg, 2012.

\bibitem{carlen2013extension}
Eric~A Carlen, Joel~L Lebowitz, and Elliott~H Lieb.
\newblock On an extension problem for density matrices.
\newblock \href{https://doi.org/10.1063/1.4808218}{{\em J. Math. Phys.}, 54:062103, 2013}.

\bibitem{lieb1973proof}
Elliott Lieb and {Mary Beth} Ruskai.
\newblock Proof of the strong subadditivity of quantum-mechanical entropy.
\newblock \href{https://doi.org/10.1063/1.1666274}{{\em J. Math. Phys.}, 14:1938, 1973}.

\bibitem{Araki1970}
Huzihiro Araki and Elliott~H. Lieb.
\newblock Entropy inequalities.
\newblock \href{https://doi.org/10.1007/978-3-642-55925-9_4}{In {\em Inequalities}. Springer, Berlin, Heidelberg, 2002}.

\bibitem{altepeter2003ancilla}
J.~B. Altepeter, D.~Branning, E.~Jeffrey, T.~C. Wei, P.~G. Kwiat, R.~T. Thew,
  J.~L. O'Brien, M.~A. Nielsen, and A.~G. White.
\newblock Ancilla-assisted quantum process tomography.
\newblock \href{https://doi.org/10.1103/PhysRevLett.90.193601}{{\em Phys. Rev. Lett.}, 90:193601, 2003}.

\bibitem{Haapasalo2019a}
Erkka Haapasalo.
\newblock Compatibility of covariant quantum channels with emphasis on Weyl
  symmetry.
\newblock \href{https://doi.org/10.1007/s00023-019-00827-x}{{\em Ann. Henri Poincar\'e}, 20:3163, 2019}.

\bibitem{Chen_et_al_2018}
Jianxin Chen, Zhengfeng Ji, David Kribs, Norbert L\"utkenhaus, and Bei Zeng.
\newblock Symmetric extension of two-qubit states.
\newblock \href{https://doi.org/10.1103/PhysRevA.90.032318}{{\em Phys. Rev. A}, 90:032318, 2014}.

\bibitem{PaChen2017}
Connor Paddock and Jianxin Chen.
\newblock A characterization of antidegradable qubit channels.
\newblock \href{https://arxiv.org/abs/1712.03399}{arXiv:1712.03399, 2017}.

\bibitem{Lami_et_al}
Ludovico Lami, Sumeet Khatri, Gerardo Adesso, and Mark~M. Wilde.
\newblock Extendibility of bosonic gaussian states.
\newblock \href{https://doi.org/10.1103/PhysRevLett.123.050501}{{\em Phys. Rev. Lett.}, 123:050501, 2019}.


\bibitem{Hsieh2021}
Chung-Yun Hsieh, Matteo Lostaglio, Antonio Acín.
\newblock {Quantum channel marginal problem.}
\newblock \href{https://arxiv.org/abs/2102.10926}{{arXiv:2102.10926}, 2021}.

\bibitem{Ryuji2019}
Ryuji Takagi, Bartosz Regula, Kaifeng Bu, Zi-Wen Liu, and Gerardo Adesso.
\newblock Operational advantage of quantum resources in subchannel
  discrimination.
\newblock \href{https://doi.org/10.1103/PhysRevLett.122.140402}{{\em Phys. Rev. Lett.}, 122:140402, 2019}.

\bibitem{TaRe2019}
Ryuji Takagi and Bartosz Regula.
\newblock {G}eneral resource theories in quantum mechanics and beyond:
  operational characterization via discrimination tasks.
\newblock \href{https://doi.org/10.1103/PhysRevX.9.031053}{{\em Phys. Rev. X}, 9:031053, 2019}.

\bibitem{fritz2013entropic}
Tobias Fritz and Rafael Chaves.
\newblock Entropic inequalities and marginal problems.
\newblock \href{https://doi.org/10.1109/TIT.2012.2222863}{{\em IEEE Trans. Inf. Theory}, 59:803, 2013}.

\bibitem{CuHu1971}
C.~D. Cushen and R.~L. Hudson.
\newblock A quantum-mechanical central limit theorem.
\newblock \href{https://doi.org/10.2307/3212170}{{\em J. Appl. Probab.}, 8:454, 1971}.

\bibitem{HoWe2001}
Alexander~S. Holevo and Reinhardt~F. Werner.
\newblock Evaluating capacities of bosonic gaussian channels.
\newblock \href{https://doi.org/10.1103/PhysRevA.63.032312}{{\em Phys. Rev. A}, 63:032312, 2001}.

\bibitem{GaZo}
C.W. Gardiner and P.~Zoller.
\newblock {\em Quantum Noise, A Handbook of Markovian and Non-Markovian Quantum
  Stochastic Methods with Applications to Quantum Optics, 3rd Edition}, Springer Verlag, Berlin Heidelberg, 2004.

\end{thebibliography}
\end{document}